\documentclass[journal]{IEEEtran}
\usepackage{blindtext}
\usepackage{graphicx}
\usepackage{amssymb}
\usepackage{amsmath}
\usepackage{tikz}
\usepackage{balance}
\usepackage{relsize}
\usetikzlibrary{patterns}
\setlength{\parskip}{1.2ex}

\usepackage{bbm}

\usepackage{pgfplots}

\usepackage{color}

\DeclareMathSymbol{\shortminus}{\mathbin}{AMSa}{"39}

\usepackage{amsfonts}
\usepackage{bm}
\setlength\unitlength{1mm}

\long\def\comment#1{}


\newfont{\bbb}{msbm10 scaled 700}

\newfont{\bb}{msbm10 scaled 1100}

\newcommand{\RR}{\mbox{\bb R}}

\newcommand{\ZZ}{\mbox{\bb Z}}

\newcommand{\EE}{\mathbbm{E}}

\newcommand{\vect}[1]{\lowercase{\mathbf{#1}}}

\newcommand{\random}[1]{\uppercase{#1}}

\newcommand{\randomvect}[1]{\uppercase{#1}}


\newcommand{\ev}{\vect{e}}

\newcommand{\mv}{\vect{m}}

\newcommand{\sv}{\vect{s}}

\newcommand{\wv}{\vect{w}}
\newcommand{\vv}{\vect{v}}
\newcommand{\xv}{\vect{x}}
\newcommand{\yv}{\vect{y}}


\newcommand{\Ac}{{\cal A}}

\newcommand{\Ic}{{\cal I}}

\newcommand{\Pc}{{\cal P}}

\newcommand{\Sc}{{\cal S}}
\newcommand{\Tc}{{\cal T}}

\newcommand{\Wc}{{\cal W}}
\newcommand{\Vc}{{\cal V}}
\newcommand{\Xc}{{\cal X}}
\newcommand{\Yc}{{\cal Y}}




\newcommand{\sr}{\random{s}}

\newcommand{\xr}{\random{x}}
\newcommand{\yr}{\random{y}}



\newcommand{\srv}{\randomvect{s}}

\newcommand{\wrv}{\randomvect{w}}
\newcommand{\vrv}{\randomvect{v}}
\newcommand{\xrv}{\randomvect{x}}
\newcommand{\yrv}{\randomvect{y}}



\newcommand{\deltav}{\bm{\delta}}

\newcommand{\supp}{{\hbox{supp}}}

\newcommand{\eqdef}{\stackrel{\Delta}{=}}



\graphicspath{{./}}

\newcommand{\ceil}[1]{\left\lceil#1\right\rceil}
\newcommand{\floor}[1]{\left\lfloor#1\right\rfloor}
\newcommand{\norm}[1]{\sigma(#1)}

\newcommand{\squeezedequation}{\medmuskip=2mu \thinmuskip=1mu \thickmuskip=3mu}
\newcommand{\supersqueezedequation}{\medmuskip=1mu \thinmuskip=0mu \thickmuskip=2mu \nulldelimiterspace=-1pt \scriptspace=0pt}
\newcommand{\toosqueezedequation}{\medmuskip=0mu \thinmuskip=-1mu \thickmuskip=1mu \nulldelimiterspace=-1pt \scriptspace=0pt}

\newcommand{\bes}{\begin{IEEEeqnarray}{ll}} 
\newcommand{\ees}{\end{IEEEeqnarray}}
\DeclareMathOperator*{\argmin}{argmin}

\usepackage{stmaryrd}
\newcommand{\interval}[2]{\llbracket#1,#2\rrbracket}

\relpenalty=9999
\binoppenalty=9999

\newtheorem{definition}{Definition}
\newtheorem{lemma}{Lemma}
\newtheorem{theorem}{Theorem}

\newenvironment{proof}{
	{\it Proof:}
}{
	{\hfill $\blacksquare$}
}


%

\hyphenation{op-tical net-works semi-conduc-tor}

\begin{document}
%
\title{Universal Mutual Information Privacy Guarantees for Smart Meters}
%
%
%

\author{\IEEEauthorblockN{Miguel Arrieta$^*$, I\~naki Esnaola$^{*\dag}$, and Michelle Effros$^{\S}$}

\IEEEauthorblockA{$^*$Dept. of Automatic  Control and Systems Engineering, University of Sheffield, Sheffield S1 3JD, UK\\
 $^\dag$Dept. of Electrical Engineering, Princeton University, Princeton, NJ 08544, USA\\
 $^\S$ Dept. of Electrical Engineering, California Institute of Technology, Pasadena, CA 91125, USA\\
}
}

%



\maketitle

\begin{abstract}

Smart meters enable improvements in electricity distribution system efficiency at some cost in customer privacy. Users with home batteries can mitigate this privacy loss by applying charging policies that mask their underlying energy use.  A battery charging policy is proposed and shown to provide universal privacy guarantees subject to a constraint on energy cost.  The guarantee bounds our strategy's maximal information leakage from the user to the utility provider under general stochastic models of user energy consumption. The policy construction adapts coding strategies for non-probabilistic permuting channels to this privacy problem.

\end{abstract}



%
\IEEEpeerreviewmaketitle
\vspace{-4mm}
\section{Introduction}
\vspace{-1mm}
Smart meters (SMs) provide advanced monitoring of consumer energy usage, thereby enabling optimized management and control of electricity distribution systems~\cite{ipakchi2009grid}. Unfortunately, the data collected by SMs can reveal information about consumers' activities. 
For instance, an individual's energy usage pattern may leak information about the times at which they run individual appliances~\cite{hart1992nonintrusive}.
Two approaches have been proposed to tackle the privacy threat posed by such information leakage.
One strategy involves manipulating user data before sending it to the utility provider (UP)~\cite{6007070}; this approach improves privacy at the cost of reduced operational insight for the UP.
The other strategy employs rechargeable batteries at each consumer site to try to decouple energy usage from energy requests~\cite{kalogridis2010privacy}; allowing devices to run off of either the battery or the UP and allowing the battery to charge at times of both activity and inactivity improves privacy at the cost of introducing individual batteries and, potentially, increasing consumer costs 
(e.g., if energy is requested when it best conceals the consumers' usage without regard to the energy bill).
This paper investigates the latter approach.

Understanding the privacy implications of any strategy requires an appropriate privacy metric.
A variety of metrics are used to study privacy in energy distribution systems. These include statistical distance metrics~\cite{kalogridis2010privacy}, differential privacy~\cite{6847974}, distortion metrics~\cite{giaconi2018joint}, and information metrics like mutual information, which can be applied under a variety of assumptions on users' energy, including i.i.d.~\cite{varodayan2011smart, kalogridis2010privacy, 6003811, tan2012smart, gomez2013privacy}, stationary~\cite{6102315,65eac443f7d6420a9bb100e3a77b70a6}, and first-order time-homogeneous Markov random processes~\cite{7536745}; see~\cite{GDH_SPM_18} for a comprehensive review.
Alternative privacy metrics such as maximal leakage~\cite{7460507}  have operational descriptions and relate to information measures like Sibson mutual information;  its generalization, maximal $\alpha$-leakage~\cite{LKSP_CORR_18}, establishes additional relationships to Arimoto mutual information, mutual information, and Renyi entropy~\cite{7460507,LKSP_CORR_18}. Many of these measures can be understood as measures of an adversary's ability to gain insight into an unknown random variable $X$ by observing $Y$, 
with measures differing only in the loss functions they use to quantify that insight~\cite{LKSP_CORR_18}.

We here use mutual information to measure privacy both because its interpretation in terms of an adversary that minimizes log-loss with respect to an evolving soft-decision model~\cite{LKSP_CORR_18} is well-matched to the evolving nature of energy distribution over time and because mutual information provides a useful bridge to adjacent fields such as hypothesis testing~\cite{PV_TIT_95}, estimation~\cite{SV_TIT_05}, and learning~\cite{XR_CORR_17}. 

Since user energy consumption may be non-stationary, we seek privacy guarantees that apply across general random process models of energy consumption.  
Moreover, given that
no battery can store unlimited energy, we impose finite capacity bounds on batteries.  We therefore model the energy management unit (EMU) as a deterministic finite-state channel. We then adapt the Ahlswede-Kaspi coding strategy proposed for permuting channels~\cite{ahlswede1987optimal} to the SM privacy setting. This work generalizes the battery policy proposed in \cite{AE_SGC_17} by including the price of the energy requested from the grid and minimizing information leakage subject to a bound on the resulting energy bill.

We denote vectors by bold letters, e.g. $\xv$, and random variables by uppercase letters, e.g.  $X$. The operator $\norm{\cdot}$ denotes the sum over vector elements, e.g. $\norm{\xv}=\sum_i x_i$. Intervals on the integers are denoted by double brackets, e.g. $\interval{a}{b}=\{a, a+1, \ldots, b-1, b\}$. The $n$-fold cartersian product of the interval is denoted by $\interval{a}{b}^n=\interval{a}{b}\times\ldots\times \interval{a}{b}$. Given a vector $\xv$ of size $n$ and a set of indices $\Ac\subseteq\interval{1}{n}$, we denote by $\xv_{\Ac}$ the vector $\xv_{\Ac} = \{x_i:i\in\Ac\}$. The support of the probability distribution $P_X$ is denoted by $\supp(P_X)$, and the positive part operator is ${(a)}^+=\max(0,a)$.


\section{Energy Management System with a Finite Battery Model}

Figure 1 depicts an  energy management system and the random processes therein. The privacy guarantee is defined in terms of the information leakage from the user to the provider, and the task of the EMU is to choose a battery policy that minimizes the leakage while satisfying the operation and cost constraints. Formal definitions follow.

\begin{figure}[t]
	 \centering
	 \includegraphics[width=0.9\columnwidth]{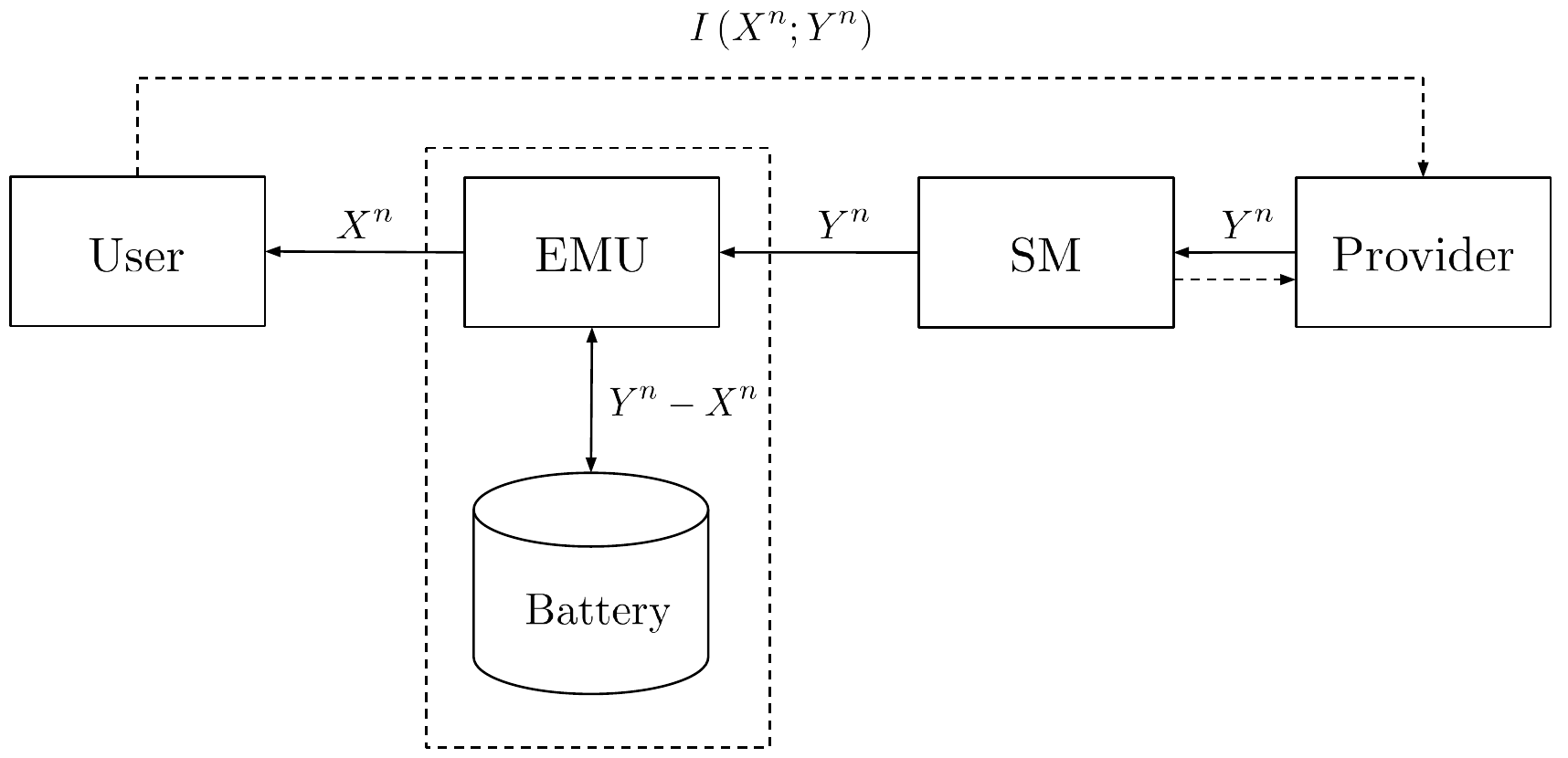}
	 \label{fig:system}
	 \caption{Energy Management System with Finite Battery Model}
\end{figure}

We model user energy consumption as a discrete-time random process $\xrv^n$ on alphabet $\Xc^n = \interval{0}{\alpha}^n$.  The random variable $\xr_i$ describes the energy consumed by the user at time step $i$ with $i=0,1,...,n-1$.  For exposition simplicity we assume $\Xc\subseteq\ZZ$; the results generalize to arbitrary discrete alphabets. We use $P_{\xrv^n}\in\Pc_{\xrv^n}$ to denote the energy consumption pattern distribution, where $\Pc_{\xrv^n}$ is a fixed family of such distributions.
Since user energy consumption profiles tend to exhibit non-stationarities~\cite{kalogridis2010privacy}, 
$\Pc_{\xrv^n}$ may contain non-stationary random processes.

The EMU maps consumption sequence $\xrv^n\in\Xc^n$ to a request sequence $\yrv^n\in\Yc^n$ using a battery policy $P_{\yrv^n|\xrv^n}$ that is not allowed to vary with $\xrv^n$;  random variable $\yr_i$ describes the energy requested from the UP at time step $i=0,1,...,n-1$.  We again focus on integer random variables ($\Yc\subseteq\ZZ$) for simplicity.  We require $\Yc\supseteq\Xc$ so that the UP can satisfy the user's energy consumption even when no battery is available.  We allow $\Yc$ to contain negative values to model scenarios where users can sell energy back to the grid. 

To be considered feasible, battery policy $P_{\yrv^n|\xrv^n}$ must create a request sequence that meets the energy demands of the user and does not request energy it cannot use or store.  Let $\beta$ denote the finite capacity of a given battery (in energy units) and $\sr_i$ denote the amount of energy stored in that battery, the ``energy state,'' at time $i$. Then $\sr_{i}$ takes values in $\Sc=\interval{0}{\beta}$ and is governed by the charging dynamics
\bes \label{eq:battery_filling}
 \sr_{i} = s_0+ \sum_{k=0}^{i-1} \yr_{k} - \sum_{k=0}^{i-1} \xr_{k},
\ees
where $s_0 \in \Sc$ is the initial battery state. 
A power outage occurs when $\sr_{i} + { \yr_{i} - \xr_{i} } < 0$; energy is wasted when $\sr_{i} + { \yr_{i} - \xr_{i} } > \beta$. 
Under this model, the battery resembles a box, energy units resemble balls that can be inserted (stored) and removed (consumed), and the set $\Yc^n(s_0,\xv)$ of feasible requests, defined formally below, contains all sequences of insertions and removals allowed by the box. This feasibility constraint resembles~\cite{ahlswede1987optimal}[Eq. 2.4] from the work of Ahlswede and Kaspi; this link is studied in~\cite{AE_SGC_17}.
\begin{definition} \label{def:feasible_set}
	Given a battery with initial state $s_0\in\Sc$ and capacity $\beta$, the {\it set of feasible energy requests} for energy consumption sequence $\xv\in\Xc^n$ is 
	\bes \label{eq:feasible_energy_requests}
	 \Yc^n(s_0,\xv) \eqdef \{ \yv\in\Yc^n : s_{i} \in \interval{0}{\beta}\ \ \forall\ i\in\interval{0}{n}\}. \IEEEeqnarraynumspace
	\ees
The {\it set of feasible battery policies} is 
	\bes \label{eq:omega_policies}
		\Omega(s_0)\eqdef \{ P_{\yrv^n|\xrv^n} : \supp( P_{\yrv^n|\xrv^n=\xv}) \subseteq \Yc^n(s_0,\,\xv)\ \ \forall \ \xv\in\Xc^n \}. \IEEEeqnarraynumspace \toosqueezedequation
	\ees
\end{definition}

Our aim in feasible policy design is to minimize privacy subject to a constraint on policy cost.  Towards this end, we next define our measures of information leakage (where privacy is high when information leakage is low) and cost.

We measure a battery policy's information leakage by its worst-case performance.


\begin{definition} \label{def:privacy_measure}
	The {\it information leakage} of policy $P_{\yrv^n|\xrv^n}$ is 
	\bes \label{eq:privacy_measure}
		\bar{\Ic}(P_{\yrv^n|\xrv^n})=\max_{P_{\xrv^n}\in\Pc_{\xrv^n}} \frac{1}{n} I(\xrv^n;\yrv^n).
	\ees
\end{definition}

We measure the cost of a policy $P_{\yrv^n|\xrv^n}$ as the difference between the user's energy bill under that policy and the user's energy bill under the feasible battery policy that minimizes the energy bill.  (Under this definition, cost can be negative only for infeasible policies.) To calculate energy bills, we model the energy market price as a deterministic sequence, $\mv\in\RR^n$. Under this definition, the cost of an energy request sequence $\yv$ is  $\mv^T\yv$. We assume that the market price is constant over each of $K$ blocks of time. The price and duration of the $k$-th block, $k=0, 1, \ldots, K-1$, are $m_k$ and $l_k$, respectively, giving 
\bes \label{eq:market_price_model}
	\mv = (\underbrace{ m_0, \ldots,m_0}_{l_0} ,\underbrace{ m_1,\ldots ,m_1}_{l_1}, \ldots, \underbrace{ m_{K-1}, \ldots,m_{K-1}}_{l_{K-1}}). \IEEEeqnarraynumspace
\ees
%
\begin{definition} \label{def:cost}
Consider an EMU with battery capacity $\beta$, initial state $s_0\in\Sc$, and market price $\mv$.  The {\it system cost of energy consumption sequence $\xv\in\Xc^n$ under battery policy $P_{\yrv^n|\xrv^n}$} is 
\bes
	g(\yrv^n, \xv) = \EE_{P_{\yrv^n|\xrv^n=\xv}}[\mv^T\yrv^n - \mv^T\yv^*(\xv)],
\ees
where $\yv^*(\xv) = \argmin_{\yv\in\Yc^n(s_0, \xv)} \mv^T\yv$.  For any $\Delta\geq 0$, the set of {\it feasible $\Delta$-affordable battery policies} is 
	\bes \label{eq:extended_policies}
	 	\Gamma(\Delta) \eqdef \left\{ P_{\yrv^n|\xrv^n}\in\Omega(s_0) : g(\yrv^n, \xv) \leq \Delta\ \ \forall\ \xv\in\Xc^n \right\}\!. \IEEEeqnarraynumspace
	\ees
\end{definition}

Finally, the privacy-cost function defines the optimal tradeoff between privacy and cost achievable by feasible battery policies.

\begin{definition} \label{def:privacy_guarantee}
	Given an EMU with battery capacity $\beta$, initial state $s_0$ and market price $\mv$, the {\it privacy cost function} is defined, for each $\Delta\geq0$, as 
	\bes \label{eq:privacy_guarantee}
		\Ic(\Delta)\eqdef \min_{P_{\yrv^n|\xrv^n}\in\Gamma(\Delta)} \bar{\Ic}(P_{\yrv^n|\xrv^n}).
	\ees
\end{definition}

To bound $\Ic(\Delta)$, we adapt techniques developed by Ahlswede and Kaspi~\cite{ahlswede1987optimal} from channel capacity to privacy-cost. While the resulting solution employs a non-causal battery policy,  detailed analysis of~\cite{ahlswede1987optimal} shows that knowing just $\beta+1$ time steps ahead suffices to achieve optimality, where $\beta$ is the battery capacity. Thus, we envision practical implementations that rely on consumption predictions. This approach also provides insight on what prediction capabilities are needed.

%
%

\section{Geometry of the Feasible Sets}

\subsection{Shared Output Sequences}

\newcommand{\yva}{\yv_{\!\Ac}}
Lemma \ref{le:shared_request_iff} characterizes a necessary and sufficient condition under which a set $\Ac$ of input pairs $(s_0,\xv)$ share a common feasible output sequence $\yva$.  Such shared output sequences are good for privacy since a UP that sees $\yva$ cannot distinguish which input pair $(s_0,\xv)\in\Ac$ caused it. Conversly, when two inputs $(s_0,\xv), (\hat{s}_0,\hat{\xv})$ share no feasible output $\yva$, the EMU cannot hide from the UP which pair caused the request. The following measure of distance 
is useful for that analysis.
\begin{definition}
The distance between two input pairs $(s_0,\xv), (\hat{s}_0,\hat{\xv})\in\Sc\times\Xc^n$ is defined as
\bes
		d_n\Big( (s_0,\xv), (\hat{s}_0,\hat{\xv}) \Big) = \!\!\!\max_{i\in\interval{0}{n-1}} \Big| \left(s_0 - \norm{\xv^i} \right)- \left( \hat{s}_0 - \norm{\hat{\xv}^i} \right) \Big|.\! \IEEEeqnarraynumspace \supersqueezedequation
	\ees
\end{definition}

Lemma \ref{le:shared_request_iff} shows that the distance between input pairs determines the existence of a shared feasible output $\yv$.
The result emphasizes the central role that battery capacity $\beta$ plays in privacy.

\begin{lemma} \label{le:shared_request_iff}
	Let $\Ac$ denote a subset of the input pair alphabet $\Sc\times\Xc^n$. The following two statements are equivalent.

	a) The distance between every two pairs $(s_0,\xv),(\hat{s}_0,\hat{\xv})\in\Ac$ is less than or equal to the capacity of the battery, i.e.
	\bes \label{eq:shared_request_a}
		d_n\Big( (s_0,\xv), (\hat{s}_0,\hat{\xv}) \Big) \leq \beta \textrm{ for all } (s_0,\xv), (\hat{s}_0,\hat{\xv})\in \Ac. \IEEEeqnarraynumspace	
	\ees
%
	\quad b) All sequences in $\Ac$ share a feasible request $\yva$, i.e. 
	\bes \label{eq:shared_request_b}
	\yva \in \bigcap_{(s_0,\xv)\in \Ac }	\Yc^n(s_0,\xv).
	\ees
\end{lemma}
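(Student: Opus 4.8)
The plan is to reduce the combinatorial feasibility condition \eqref{eq:shared_request_b} to a coordinatewise interval problem. Writing $Y_i=\norm{\yv^i}$ for the energy requested up to time $i$, the charging dynamics \eqref{eq:battery_filling} give $s_i=\bigl(s_0-\norm{\xv^i}\bigr)+Y_i$, so $\yv\in\Yc^n(s_0,\xv)$ holds precisely when
\[
-\bigl(s_0-\norm{\xv^i}\bigr)\le Y_i\le \beta-\bigl(s_0-\norm{\xv^i}\bigr)
\]
at every relevant time step $i$. Hence a request $\yva$ shared by all of $\Ac$, as in \eqref{eq:shared_request_b}, is exactly a cumulative sequence $\{Y_i\}$ that lies, for each $i$, in the intersection over $(s_0,\xv)\in\Ac$ of these intervals. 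This recasts the question as: when is every such intersection nonempty, and can a genuine request realize the chosen points.

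For $\text{(b)}\Rightarrow\text{(a)}$ I would argue directly. If $\yva$ is feasible for both $(s_0,\xv)$ and $(\hat{s}_0,\hat{\xv})$, then at each $i$ the two states $s_i$ and $\hat{s}_i$ use the \emph{same} $Y_i$ and both lie in $\interval{0}{\beta}$; since their difference $\bigl(s_0-\norm{\xv^i}\bigr)-\bigl(\hat{s}_0-\norm{\hat{\xv}^i}\bigr)$ is independent of $Y_i$ and two points of $\interval{0}{\beta}$ differ by at most $\beta$, the absolute value of this difference is at most $\beta$. Maximizing over $i$ gives \eqref{eq:shared_request_a} for this pair, and the pair was arbitrary.

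The converse $\text{(a)}\Rightarrow\text{(b)}$ is the heart of the argument. Fixing $i$, the intervals above have a common point iff their largest lower endpoint does not exceed their smallest upper endpoint, i.e. iff
\[
\max_{(s_0,\xv)\in\Ac}\bigl(s_0-\norm{\xv^i}\bigr)-\min_{(\hat{s}_0,\hat{\xv})\in\Ac}\bigl(\hat{s}_0-\norm{\hat{\xv}^i}\bigr)\le\beta .
\]
The left-hand side is the largest pairwise gap among the values $s_0-\norm{\xv^i}$, so assumption \eqref{eq:shared_request_a} makes it $\le\beta$ at every $i$; each intersection is therefore nonempty, and since all data are integers it contains an integer $Y_i$. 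Choosing such a $Y_i$ at each step (with $Y_0=0$, which is admissible because $s_0\in\Sc$) and setting $y_k=Y_{k+1}-Y_k$ produces a request $\yva$ that by construction keeps every $s_i$ in $\interval{0}{\beta}$ for all $(s_0,\xv)\in\Ac$, giving \eqref{eq:shared_request_b}.

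The step I expect to be the main obstacle is ensuring that the independently chosen $Y_i$ are the partial sums of a bona fide request $\yva\in\Yc^n$, i.e. that the increments $y_k=Y_{k+1}-Y_k$ remain admissible; this is where I would invoke the richness of $\Yc$ (its being unrestrictive on individual requests), noting that tight per-step bounds on $\Yc$ would couple the coordinates and could invalidate the clean equivalence. A secondary point is to align the index ranges, checking that the time steps appearing in $d_n$ are exactly those carrying nontrivial battery constraints (the state $s_0\in\Sc$ being feasible for free), so that the nonemptiness condition and \eqref{eq:shared_request_a} coincide term by term.
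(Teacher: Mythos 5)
Your reduction to per-time intervals and your proof of (b)$\Rightarrow$(a) are sound, and the latter is in substance the paper's own converse argument. The genuine gap is in (a)$\Rightarrow$(b), at exactly the step you flagged and then left unresolved: nonemptiness of each intersection of intervals $\left[-(s_0-\norm{\xv^i}),\,\beta-(s_0-\norm{\xv^i})\right]$ over $(s_0,\xv)\in\Ac$ does not permit choosing the cumulative values $Y_i$ independently, because the increments $y_i=Y_{i+1}-Y_i$ must lie in the per-step alphabet $\Yc$, and the paper only guarantees $\Xc=\interval{0}{\alpha}\subseteq\Yc\subseteq\ZZ$. Independent choices really do break: already for a singleton $\Ac=\{(s_0,\xv)\}$, taking $Y_i$ at the right endpoint $\beta-(s_0-\norm{\xv^i})$ and $Y_{i+1}$ at the left endpoint $-(s_0-\norm{\xv^{i+1}})$ forces an increment $\left(\norm{\xv^{i+1}}-\norm{\xv^i}\right)-\beta\leq\alpha-\beta$, which is negative and outside $\interval{0}{\alpha}$ whenever $\beta>\alpha$. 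Nor can you appeal to ``richness of $\Yc$'': the adjacent results (Lemma \ref{le:outputset} and Lemma \ref{le:ordering0}) rely on the lemma holding precisely for the minimal alphabet $\Yc=\interval{0}{\alpha}$, so that is the case you must cover --- and, contrary to your closing caveat, the equivalence does survive there; what fails is only the strategy of selecting each $Y_i$ separately.

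The missing idea --- which is the paper's proof --- is to use one \emph{coherent extremal selection} for all times: take $Y_i$ to be the common left endpoint, $Y_i=-\min_{(s_0,\xv)\in\Ac}\left(s_0-\norm{\xv^{i}}\right)$ (clipped at $0$, i.e. $Y_i=\left(-\min_{\Ac}(s_0-\norm{\xv^i})\right)^+$, in case the envelope is positive). Since each map $i\mapsto s_0-\norm{\xv^i}$ is non-increasing with per-step drops in $\interval{0}{\alpha}$, the same holds for the lower envelope $\min_{\Ac}(s_0-\norm{\xv^i})$; hence this particular selection is non-decreasing with increments in $\interval{0}{\alpha}\subseteq\Yc$, so admissibility of the request sequence is automatic and no assumption on $\Yc$ beyond $\Yc\supseteq\Xc$ is needed. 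Feasibility of this single sequence for every $(\hat{s}_0,\hat{\xv})\in\Ac$ is then your interval computation: the induced state is $(\hat{s}_0-\norm{\hat{\xv}^{i}})-\min_{\Ac}(s_0-\norm{\xv^{i}})$ (or its clipped variant), which lies in $\interval{0}{\beta}$ exactly because all pairwise gaps are bounded by $\beta$, i.e. by \eqref{eq:shared_request_a}; this is the paper's display (\ref{eq:iff_achievability}). With that selection rule supplied, your argument closes and coincides with the paper's.
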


\begin{proof}
	Let the sequence $\yva$ be such that for all $i$:
	\bes
		\norm{\yva^i} = -\min_{(s_0,\xv)\in \Ac } (s_0 - \norm{\xv^{i}}).
	\ees
	Thus, for any $(\hat{s}_0,\hat{\xv})\in\Ac$, the battery state at time $i+1$ is
	\bes \label{eq:iff_achievability}
		s_{i+1} =  (\hat{s}_0 - \norm{\hat{\xv}^{i}}) - \min_{(s_0,\xv)\in \Ac } (s_0 - \norm{\xv^{i}}). \IEEEeqnarraynumspace \squeezedequation
	\ees
	Now $d_n\big( (s_0,\xv), (\hat{s}_0,\hat{\xv}) \big) \leq \beta$ implies that $s_{i+1} \in \interval{0}{\beta}$ for all $i$, so $\yva$ is a feasible sequence.
	The converse follows since for any sequence $\yv$ and any two input pairs $(s_0,\xv),(\hat{s}_0,\hat{\xv})\in\Ac$ such that $d_n\big( (s_0,\xv), (\hat{s}_0,\hat{\xv}) \big) > \beta$, the absolute difference between the corresponding battery states at some time step $i$ satisfies
	\bes
		\big|s_{i+1}-\hat{s}_{i+1}\big| = \left|(s_0 - \norm{\xv^{i}}) -  (\hat{s}_0 - \norm{\hat{\xv}^{i}})\right|>\beta.
	\ees
	Thus $s_{i+1}$ and $\hat{s}_{i+1}$ cannot both belong to $\Sc=\interval{0}{\beta}$.
\end{proof}

\subsection{Cardinality bounds}

Building on Lemma~\ref{le:shared_request_iff}, Theorem~\ref{th:covering} gives an upper bound on the number of distinguishable input pairs $(s_0,\xv^n)\in\Sc_0\times\Xc^n$, where $\Sc_0\subseteq\Sc$ is the set of possible initial battery states. The result is derived by building a covering $\{\Ac_i\}$ of $\Sc_0\times\Xc$  such that all input pairs in each $\Ac_i$ share a common feasible request $\yv_i$.  The result shows that the minimal time $\lambda\eqdef \floor{ {(\beta+1)}/{\alpha} }$ needed to fully discharge a battery of capacity $\beta$ under maximal consumption $\alpha\eqdef\max\Xc$   is a central parameter in the construction of privacy preserving battery policies.  The proof is inspired by the code construction presented by Ahlswede and Kaspi \cite[Proposition 1]{ahlswede1987optimal}.

\begin{theorem} \label{th:covering}
	Let the input alphabet be $\Sc_0\times\Xc^n$, with $\overline{\Sc_0}$ and $\underline{\Sc_0}$ denoting the maximum and minimum values of $\Sc_0$, respectively. There exists a set of request sequences $\Vc^n(\Sc_0) \subseteq \Yc^{n}$ such that
	\bes
		\log \big|\Vc^n(\Sc_0) \big| \leq \ceil{ \frac{n- \floor{(\beta+\underline{\Sc_0}-\overline{\Sc_0})/\alpha}   }{\lambda} }.
	\ees
	Moreover, for every input pair $(s_0,\xv) \in \Sc_0 \times \Xc^{n}$, at least one sequence $\vv\in\Vc^n(\Sc_0)$ is feasible, that is
	\bes
		\Yc^n(s_0,\xv) \cap \Vc^n(\Sc_0) \not= \emptyset.
	\ees

\end{theorem}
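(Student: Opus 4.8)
The plan is to combine Lemma~\ref{le:shared_request_iff} with an explicit covering of the input alphabet by cells of small diameter. First I would reformulate feasibility: a request sequence $\vv$ lies in $\Yc^n(s_0,\xv)$ precisely when the battery state $s_0 + \norm{\vv^i} - \norm{\xv^i}$ stays in $\interval{0}{\beta}$ for every $i$. Introducing the \emph{deficit trajectory} $z_i(s_0,\xv)\eqdef s_0-\norm{\xv^i}$, this reads $z_i+\norm{\vv^i}\in\interval{0}{\beta}$, and the distance $d_n$ of the preceding definition is exactly the sup-norm distance between two such trajectories. By Lemma~\ref{le:shared_request_iff}, any cell $\Ac\subseteq\Sc_0\times\Xc^n$ whose deficit trajectories lie in a common width-$\beta$ band at every time (equivalently, $d_n$-diameter at most $\beta$) admits a shared feasible request. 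Hence it suffices to cover $\Sc_0\times\Xc^n$ by such cells and collect one shared request per cell into $\Vc^n(\Sc_0)$: the intersection property in the theorem is then immediate, and $|\Vc^n(\Sc_0)|$ is at most the number of cells.

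It remains to build few cells, and the governing geometry is that each deficit trajectory starts at $z_0=s_0\in\interval{\underline{\Sc_0}}{\overline{\Sc_0}}$ and is non-increasing with steps of size at most $\alpha=\max\Xc$. I would track the trajectories with non-increasing staircase bands of width $\beta$. Place the first band with top edge $\overline{\Sc_0}$; since $\Sc_0\subseteq\interval{0}{\beta}$, its bottom edge $\overline{\Sc_0}-\beta\le\underline{\Sc_0}$ catches every starting value. Because a trajectory can leave the bottom of this band only after descending $\beta+\underline{\Sc_0}-\overline{\Sc_0}$ units, and it descends at rate at most $\alpha$, no trajectory exits before time $M\eqdef\floor{(\beta+\underline{\Sc_0}-\overline{\Sc_0})/\alpha}$; the initial segment $\interval{0}{M}$ therefore needs no branching. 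After time $M$ I partition the remaining $n-M$ steps into blocks of length $\lambda=\floor{(\beta+1)/\alpha}$, and at each block boundary allow the band to either stay or drop by one fixed increment, a binary choice. The key arithmetic fact is that over $\lambda$ steps a trajectory descends by at most $\lambda\alpha\le\beta+1$, so the ``stay'' and ``drop'' bands overlap enough that, given the position of $z$ at a block boundary, at least one continuation keeps $z$ inside a width-$\beta$ band throughout the next block. Iterating, every trajectory is caught for all time by one band determined by a sequence of $\ceil{(n-M)/\lambda}$ binary choices.

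The count is then immediate: the bands form a binary tree of depth $\ceil{(n-M)/\lambda}$, so there are at most $2^{\ceil{(n-M)/\lambda}}$ of them, each defining one cell and contributing one request to $\Vc^n(\Sc_0)$; taking logarithms yields the claimed bound.

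I expect the main obstacle to be the boundary bookkeeping in the branching step: verifying that, with block length $\lambda$ and a single well-chosen drop increment, the two candidate bands genuinely cover every admissible continuation of $z$ while each cell retains $d_n$-diameter at most $\beta$ at \emph{every} intermediate time, not merely at the block boundaries. This is where the near-equality $\lambda\alpha\le\beta+1$ must be handled carefully --- the potential off-by-one is precisely what forces the floors and the specific initial segment length $M$ --- and where the alignment with the Ahlswede--Kaspi code construction~\cite[Proposition 1]{ahlswede1987optimal} has to be checked to land exactly on $\ceil{(n-M)/\lambda}$.
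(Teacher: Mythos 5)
Your proposal follows essentially the same route as the paper's proof: a single branch-free initial cell of length $l=\floor{(\beta+\underline{\Sc_0}-\overline{\Sc_0})/\alpha}$ justified by Lemma~\ref{le:shared_request_iff}, followed by $\ceil{(n-l)/\lambda}$ blocks of length $\lambda$ with one binary choice per block, yielding $2^{\ceil{(n-l)/\lambda}}$ shared requests and hence the claimed logarithmic bound. The bookkeeping obstacle you flag is resolved in the paper exactly along the lines you anticipate: the two cells in each block are defined by the terminal deficit ($s_0-\norm{\xv}\in\interval{0}{\beta}$ versus $s_0-\norm{\xv}\in\interval{-\lambda\alpha}{-1}$), and the constraints that a depleting trajectory must descend below $0$ by the block's end while descending at rate at most $\alpha$ force each cell's pointwise spread to be at most $\lambda\alpha-1\leq\beta$ at every intermediate time, so Lemma~\ref{le:shared_request_iff} applies to both cells.
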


\begin{proof}
	At time step $i$, the value of $s_0 - \norm{\xv^{i}}$ for any input pair $(s_0,\xv)\in\Sc_0\times\Xc^{i}$ with $\Xc = \interval{0}{\alpha}$ is bounded by
	\bes \label{eq:zbound}
		\underline{\Sc_0}- i\alpha \leq s_0 - \norm{\xv^{i}} \leq \overline{\Sc_0}.
	\ees
	At time step $l = \floor{(\beta+\underline{\Sc_0}-\overline{\Sc_0})/\alpha}$, the distance between any two input pairs $(s_0,\xv),(\hat{s}_0,\hat{\xv})\in\Sc_0\times\Xc^{l}$ is bounded by
	\bes
		d_l\Big( (s_0,\xv), (\hat{s}_0,\hat{\xv}) \Big) \leq \overline{\Sc_0} - (\underline{\Sc_0}- l\alpha) \leq \beta.
	\ees
	Therefore, Lemma \ref{le:shared_request_iff} guarantees the existence of a request $\yv_0$ that is feasible for every input pair in $\Sc_0\times\Xc^{l}$.
	Following a similar reasoning, consider the set of possible input pairs during the subsequent $\lambda$ times steps, i.e. $\Sc\times\Xc^\lambda$ with $\Sc = \interval{0}{\beta}$. Define a cover of the input alphabet, $\Sc\times\Xc^\lambda\subseteq \left(\Ac_1 \bigcup \Ac_2\right)$, with subsets given by
	\bes
		\Ac_1 = \left\{ (s_0,\xv) \in \Sc\times\Xc^\lambda : s_0 - \norm{\xv} \in \interval{0}{\beta} \right\}, \IEEEeqnarraynumspace
	\ees
	and
	\bes
		\Ac_2 = \left\{ (s_0,\xv) \in \Sc\times\Xc^\lambda : s_0 - \norm{\xv} \in \interval{-\lambda\alpha}{-1} \right\}. \IEEEeqnarraynumspace
	\ees
	Note $\Ac_1 \bigcup \Ac_2$ contains all sequences in $\Sc\times\Xc^\lambda$ as (\ref{eq:zbound}) implies that $s_0 - \norm{\xv} \in \interval{-\lambda \alpha }{\beta}$.
	The distance between any two input pairs in $\Ac_i$ with $i=1,2$ is bounded by $\beta$. Therefore, by Lemma \ref{le:shared_request_iff}, there exists a shared feasible sequence $\yv_i$ for all pairs in $\Ac_i$. 
	Setting $\kappa = \ceil{(n-l)/\lambda}$ and
	\bes
	\label{eq:req_secs}
		\Vc^n(\Sc_0) = \{\yv_0\} \times \underbrace{ \{\yv_1,\yv_2\} \times ... \times \{\yv_1,\yv_2\} }_{\kappa} \supersqueezedequation \IEEEeqnarraynumspace
	\ees
	completes the proof.
\end{proof}

To map input pairs $(s_0,\xv)$ to energy request in $\Vc^n(\Sc_0)$ it suffices to forecast, at the start of each block of length $\lambda$, whether the battery will deplete during the current block, i.e. $s_0 - \norm{\xv^\lambda} \lessgtr 0$. 
In \cite{AEE_ARXIV_19}, it is shown that  the upper bound in Theorem \ref{th:covering}  is tight.
The construction of the set of request sequences given by (\ref{eq:req_secs}) describes the forecasting capabilities required to implement optimal battery policies.

\begin{theorem} \label{th:packing}
	Let $s_0$ denote the state of the battery at time $0$, and let $\Sc_l$ denote the possible states of the battery at time $l\in\interval{0}{n}$. Then there exists a set $\Wc^l(\{s_0\},\Sc_l) \subseteq \Xc^l$ with cardinality
	\bes \label{eq:packing}
		\left|\Wc^l(\{s_0\},\Sc_l)\right| \geq 2^{\hat{\kappa}} \ceil{\frac{l\alpha-\hat{\kappa} \ceil{\lambda}\alpha}{|\Sc_l|}},
	\ees
	and
	\bes
		\Yc^l(s_0,\wv,\Sc_l) \cap \Yc^l(s_0',\wv',\Sc_l) = \emptyset,
	\ees
	for any distinct $(s_0,\wv)$, $(\hat{s}_0,\hat{\wv})$ in $\Wc^l(\Sc_0,\Sc_l)$, $\lambda = (\beta+1)/\alpha$ and $\hat{\kappa} = \max(0,\floor{l/\ceil{\lambda}-1})$.
\end{theorem}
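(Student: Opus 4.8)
The plan is to treat this as the packing (converse) counterpart of the covering bound in Theorem \ref{th:covering}: instead of few request sequences each feasible for many inputs, I would construct \emph{many} consumption sequences whose terminally-constrained feasible request sets are pairwise disjoint, so that a provider observing any feasible request can always identify the input. The workhorse is the pairwise contrapositive of Lemma \ref{le:shared_request_iff}: two input pairs $(s_0,\wv),(s_0,\hat\wv)$ have disjoint feasible sets whenever $d_l\big((s_0,\wv),(s_0,\hat\wv)\big) > \beta$, and since the initial state is shared this distance reduces to $\max_{i}\big|\,\norm{\wv^i}-\norm{\hat\wv^i}\,\big|$. The one new ingredient is the terminal constraint carried by the third argument of $\Yc^l(\cdot,\cdot,\Sc_l)$: even when trajectories never diverge by more than $\beta$, the requests can be forced apart through the endpoint, since (taking $\Sc_l$ to be an interval of $|\Sc_l|$ consecutive states) the set of total requests $\{s-s_0+\norm{\wv}:s\in\Sc_l\}$ realizable by $\wv$ is an interval of length $|\Sc_l|$, and two such intervals are disjoint as soon as $\big|\norm{\wv}-\norm{\hat\wv}\big|\ge |\Sc_l|$.

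With these two separation mechanisms in hand, I would build $\Wc^l(\{s_0\},\Sc_l)$ as a product, mirroring the $\{\yv_0\}\times\{\yv_1,\yv_2\}^{\kappa}$ structure of the covering proof. First partition the horizon into $\hat\kappa$ consecutive blocks of length $\ceil{\lambda}=\ceil{(\beta+1)/\alpha}$ followed by a residual segment of length $l-\hat\kappa\ceil{\lambda}$. In each block a sequence consumes either maximally ($x_i=\alpha$) or minimally ($x_i=0$); because $\ceil{\lambda}\alpha\ge\beta+1$, the two choices drive $\norm{\wv^i}$ apart by more than $\beta$ by the block's end, and this gap is recorded by the maximum in $d_l$ even if a later block cancels it, giving the $2^{\hat\kappa}$ factor. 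Within the residual segment I would place the total consumptions on a grid of spacing $|\Sc_l|$; the achievable range of $\norm{\wv}$ there is $(l-\hat\kappa\ceil{\lambda})\alpha = l\alpha-\hat\kappa\ceil{\lambda}\alpha$, which accommodates at least $\ceil{(l\alpha-\hat\kappa\ceil{\lambda}\alpha)/|\Sc_l|}$ grid points and supplies the second factor. The choice $\hat\kappa=\max(0,\floor{l/\ceil{\lambda}-1})$ reserves one full block of length $\ge\ceil{\lambda}$ for the residual, guaranteeing that the residual range is positive and that the grid is realizable with nonnegative consumptions bounded by $\alpha$.

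I would close by verifying pairwise disjointness over the full product in two cases. For sequences with differing block patterns, $d_l>\beta$ already holds, so Lemma \ref{le:shared_request_iff} gives disjointness for the \emph{unconstrained} feasible sets; imposing the terminal requirement only shrinks these sets, so the separation persists regardless of the residual choice. For sequences sharing the block pattern but differing in the residual, the trajectories coincide through the blocks and the terminal mechanism applies, since their total consumptions differ by at least $|\Sc_l|$. Multiplying the two factors then yields $2^{\hat\kappa}\,\ceil{(l\alpha-\hat\kappa\ceil{\lambda}\alpha)/|\Sc_l|}$, matching (\ref{eq:packing}). I expect the main obstacle to be making the terminal-constraint separation fully rigorous: Lemma \ref{le:shared_request_iff} as stated ignores the endpoint, so I would first establish the refined claim that, for inputs sharing $s_0$ whose trajectories stay within $\beta$, spacing $\big|\norm{\wv}-\norm{\hat\wv}\big|\ge|\Sc_l|$ suffices for disjointness of $\Yc^l(s_0,\cdot,\Sc_l)$, pin down the exact constant (guarding against an off-by-one in $|\Sc_l|$), and confirm that each constructed sequence admits a nonempty terminally-feasible request set so that the two mechanisms never vacuously or mutually interfere.
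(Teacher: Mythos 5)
Your proposal is correct and follows essentially the same construction as the paper's proof: a concatenation of $\hat{\kappa}$ blocks of length $\ceil{\lambda}$ using the two extreme consumptions $\norm{\wv'}=0$ and $\norm{\wv''}=\ceil{\lambda}\alpha$ (separated via the converse direction of Lemma~\ref{le:shared_request_iff}), followed by a residual block whose total consumptions are placed on a grid of spacing $|\Sc_l|$ so that the terminal-state constraint $s_l\in\Sc_l$ identifies the residual sequence. If anything, you are more explicit than the paper about the one delicate point --- that the terminal-state separation needs $\Sc_l$ to behave like an interval of $|\Sc_l|$ consecutive states, which the paper's proof uses implicitly without comment.
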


\begin{proof}
	We prove the result by constructing a set of sequences that satisfies the conditions of the theorem. The construction is done by concatenation of $\hat{\kappa}$ blocks of length $\ceil{\lambda}$ and one block of length $l-\hat{\kappa}\ceil{\lambda}$, i.e.
	\bes
		\Wc^l(\{s_0\},\Sc_l) = \underbrace{ \Wc^{\ceil{\lambda}} \times ... \times \Wc^{\ceil{\lambda}}}_{\hat{\kappa}} \times \Wc^{l - \hat{\kappa}\ceil{\lambda}}_{\Sc_l}. \IEEEeqnarraynumspace
	\ees
	Let	the alphabet defining the first $\hat{\kappa}$ blocks be $\Wc^{\ceil{\lambda}}=\{\wv',\wv''\}$, where $\wv'$ and $\wv''$ are any sequences in $\Xc^{\ceil{\lambda}}$ such that $\norm{\wv'} = 0$ and $\norm{\wv''} = \ceil{\lambda}\alpha$. This implies that
	\bes
		d\Big((s_0,\wv'),(s_0,\wv'')\Big) = |\norm{\wv''}-\norm{\wv'}| = \ceil{\lambda}\alpha > \beta. \IEEEeqnarraynumspace
	\ees
	Therefore, by Lemma \ref{le:shared_request_iff}, no output sequence is shared between the input pairs $(s_0,\wv')$ and $(s_0,\wv'')$, i.e
	\bes
		\Yc^{\ceil{\lambda}}(s_0,\wv')\cap\Yc^{\ceil{\lambda}}(s_0,\wv'') = \emptyset.
	\ees
	Thus, the input sequence $\wv_{\yv}\in\{\wv',\wv''\}$, and the initial battery state of the second block $s_{\ceil{\lambda}} = s_0-\norm{\wv_{\yv}}+\norm{\yv}$ are uniquely determined by the output sequence $\yv$. The argument above can be applied recursively for the first $\hat{\kappa}$ blocks.

	Following a similar reasoning, let the alphabet defining the last block be given by $\Wc^{l-\hat{\kappa}\ceil{\lambda}}=\{\wv_0,\wv_1,...,\wv_N\}$ with $N=\floor{{(l\alpha-\hat{\kappa}\ceil{\lambda}\alpha})/{|\Sc_l|}}$ and $\wv_i$ any sequence in $\Xc^{l-\hat{\kappa}\ceil{\lambda}}$ such that $\norm{\wv_i} = i|\Sc_l|$. Consequently, for any given $\yv$, only one sequence $\wv_j$ satisfies the constraint $s_l = s_{\hat{\kappa}\ceil{\lambda}}-\norm{\wv_j}+\norm{\yv}\in\Sc_l$ simultaneously. This completes the proof.
\end{proof}

\subsection{Impact of the Output Alphabet on Information Leakage}

In the following, we characterize the impact of the output alphabet on the information leakage $\Ic(\Delta)$. In particular, we show that the information leakage does not increase when the policy operates with a constrained output alphabet $\Yc_c$. Lemma \ref{le:outputset} shows that it is possible to remove extreme values, i.e. $y_i \not \in \interval{0}{\alpha}$, while retaining the feasibility of the sequence $\yv\in\Yc(s_0,\xv)$. 
\begin{lemma} \label{le:outputset}
	Let two output alphabets $\Yc_c^n$ and $\Yc^n$ be such that $\interval{0}{\alpha}^n \subseteq \Yc_c^n \subseteq \Yc^n \subseteq \ZZ^n$. Then there exists a function $F_n: \Yc^n \to \Yc_c^n$ such that for any $(s_0,\xv)\in\Sc\times\Xc^n$ and $\yv\in\Yc^n(s_0,\xv)$ it holds that
	\bes
		F_n(\yv) \in \Yc^n_c(s_0,\xv).
	\ees
\end{lemma}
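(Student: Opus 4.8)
The plan is to exhibit a single causal ``clip-with-carry'' map $F_n$ that reads only $\yv$ (never the unknown pair $(s_0,\xv)$) yet preserves feasibility for every pair at once. Concretely, I would set $\yv'=F_n(\yv)$ through the recursion $c_0=0$, $y_i'=\min(\alpha,(y_i+c_i)^+)$, and $c_{i+1}=c_i+y_i-y_i'$ for $i\in\interval{0}{n-1}$. Each $y_i'\in\interval{0}{\alpha}$, hence $\yv'\in\interval{0}{\alpha}^n\subseteq\Yc_c^n$, which already places the image in the required codomain. Here $c_i$ is a ``backlog'' holding the insertions or removals that clipping has deferred; since $c_i$ is built only from $y_0,\dots,y_{i-1}$, the map $F_n$ is genuinely a function of $\yv$ alone, which is the whole difficulty of the statement.

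The engine of the argument is the identity $c_i=\norm{\yv^i}-\norm{\yv'^i}$, which follows by a one-line induction from the recursion (using $c_0=0$). Because the dynamics (\ref{eq:battery_filling}) are affine in the request sequence for a fixed $s_0$ and $\xv$, this yields $c_i=s_i-s_i'$, where $s_i=s_0+\norm{\yv^i}-\norm{\xv^i}$ and $s_i'=s_0+\norm{\yv'^i}-\norm{\xv^i}$ are the battery trajectories driven by $\yv$ and $\yv'$ under the \emph{same} input pair. Thus the carry is exactly the gap between the two battery states, simultaneously for every $(s_0,\xv)$, and it is this coupling that lets an $\xv$-blind map steer the $\xv$-dependent state back into range.

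I would then show $s_i'\in\interval{0}{\beta}$ for all $i\in\interval{0}{n}$ by induction, which is precisely the claim $\yv'=F_n(\yv)\in\Yc_c^n(s_0,\xv)$. The base case $s_0'=s_0$ is immediate from $c_0=0$. For the step, substitute $y_i=s_{i+1}-s_i+x_i$ into the clip argument to rewrite it as $v\eqdef s_{i+1}+x_i-s_i'$, so that $s_{i+1}'=s_{i+1}-\big(v-\min(\alpha,(v)^+)\big)$. A three-way case analysis on $v$ then closes it: if $v\in\interval{0}{\alpha}$ the clip is inactive and $s_{i+1}'=s_{i+1}\in\interval{0}{\beta}$; if $v>\alpha$ then $s_{i+1}'=s_i'+\alpha-x_i$ lies in $[0,s_{i+1})\subseteq\interval{0}{\beta}$ using $x_i\le\alpha$ and $s_i'\ge0$; and if $v<0$ then $s_{i+1}'=s_i'-x_i$ lies in $(s_{i+1},\beta]\subseteq\interval{0}{\beta}$ using $x_i\ge0$, $s_i'\le\beta$, and $s_{i+1}\ge0$. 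Note that only the feasibility of $\yv$ (through $s_{i+1}\in\interval{0}{\beta}$), the hypothesis $s_i'\in\interval{0}{\beta}$, and $x_i\in\interval{0}{\alpha}$ enter the bounds.

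The main obstacle is conceptual rather than computational: one must recognize that the universal constraint forces $F_n$ to regulate the state through the carry alone, and then spot the reparametrization $v=s_{i+1}+x_i-s_i'$ that identifies $c_i$ with the state gap $s_i-s_i'$ and turns the recursion into a telescoping statement about the two trajectories. Once that identification is in hand, the case analysis is routine, and the positivity of $\alpha-x_i$ and $x_i$ supplies exactly the slack needed in the two saturated branches.
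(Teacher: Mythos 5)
Your proof is correct, and the underlying mechanism is the same one the paper uses: clip each request into $\interval{0}{\alpha}$ and defer the excess (or deficit) to later time steps, so that the map reads only $\yv$ and never $(s_0,\xv)$. The difference is in execution, and yours is the more complete of the two. The paper builds $F_n$ as a composition $h_n\circ\cdots\circ h_1$ of local moves, each shifting $d_i\in\interval{0}{(y_i-\alpha)^+}$ units from slot $i$ to slot $i+1$; it verifies feasibility for a single such move and then disposes of the remaining cases (requests with $y_i<0$, reallocation to earlier slots, and the recursive composition itself) with ``a similar argument'' remarks. Your single-pass clip-with-carry recursion handles the upper clip, the lower clip, and the composition uniformly: the invariant $c_i=\norm{\yv^i}-\norm{\yv'^i}=s_i-s_i'$ holds identically in $(s_0,\xv)$, and your three-way case analysis on $v=s_{i+1}+x_i-s_i'$ (inactive clip gives $s_{i+1}'=s_{i+1}$; upper clip gives $s_{i+1}'=s_i'+\alpha-x_i\in[0,s_{i+1})$; lower clip gives $s_{i+1}'=s_i'-x_i\in(s_{i+1},\beta]$) is a complete induction where the paper sketches only one branch, so your write-up actually closes gaps the published proof leaves open. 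Two small points are worth stating explicitly: each $y_i'$ is an integer (since $y_i$, $c_i$, and $\alpha$ are), which is needed for $\yv'\in\interval{0}{\alpha}^n\subseteq\Yc_c^n$ under the paper's integer-interval convention; and at the final step the terminal carry $c_n$ is silently discarded, which is harmless and matches the paper's convention that excess at the last time step is removed rather than reallocated.
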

\begin{proof}
	We first define the set of functions $\lbrace h_i\rbrace_{i=1}^n$ that will yield the construction of $F_n$. For each function $h_i$ with $i\in\interval{1}{n}$ set $d_i\in\interval{0}{(y_i-\alpha)^+}$ and define $h_i: \Yc^n \to \Yc_c^n$ as
	\bes
		h_i(\yv) &=
		\begin{cases}
			\yv+d_i(\ev_{i+1}-\ev_{i}),& \text{when }  i\in\interval{1}{n-1}\\
			\yv-d_i\ev_{i},& \text{otherwise}.\\
		\end{cases}
	\ees
	That is, the function $h_i$ reallocates the purchase of $d_i$ units of energy from time step $i$ to the next time step $i+1$. 
	Note that when this occurs on the last time step, i.e. when $i=n$, the excess energy request is not reallocated but removed from the sequence.
	\newcommand\sy{\sv}
	\newcommand\shy{\tilde{\sv}}
	Let $\sy\in\Sc^{n+1}$ be the sequence of battery states induced by the feasible sequence $\yv$. By the battery charging dynamics (\ref{eq:battery_filling}), the sequence of battery states induced by $\tilde{\yv} = h_i(\yv)$ is given by
	\bes
		\shy = \sy-d_i\ev_{i+1},
	\ees
	with $d_i\in\interval{0}{(y_i-\alpha)^+}$. Note that
	\bes
		\shy_{i+1} = \sy_{i+1}-d_i \leq \sy_{i+1},
	\ees
	and since $x_i\leq\alpha\leq y_i-d_i$
	\bes
		\shy_{i+1} = \shy_{i} + (y_i-d_i)-x_i \geq \sy_i.
	\ees
	As $\sy\in\Sc^{n+1}$, this implies that $\shy\in\Sc^{n+1}$, i.e. $h_i(\yv)$ is feasible.
	The above argument shows that any excess energy request can be reallocated to the next time step. A similar argument shows that any excess, i.e. $y_i \geq \alpha$, can be reallocated to the previous time step. Furthermore any excess energy selling, i.e. $y_i<0$, can be reallocated to the next and previous time steps without impacting the feasibility of the energy request. A recursive application of the arguments above yields the existence of the function $F_n$ constructed as 
	\bes
			F_n(\yv)=h_n\circ h_{n-1}\cdots \circ h_1 (\yv),
	\ees
	so that $F_n(\yv) \in \Yc_c(s_0,\xv)$.
\end{proof}

The lemma above shows that battery policies that operate over an output alphabet with a maximum energy request that matches the peak energy consumption of the user, i.e. $\Yc=\interval{0}{\alpha}$, are sufficient to satisfy the feasibility.
\begin{lemma} \label{le:ordering0}
	Let the output alphabet $\Yc$ contain the input interval $\Xc = \interval{0}{\alpha}$, then
	\bes
		\Ic_{\Xc}(\infty) = \Ic_{\Yc}(\infty).
	\ees
\end{lemma}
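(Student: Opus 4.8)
The plan is to prove the two inequalities $\Ic_\Yc(\infty) \leq \Ic_\Xc(\infty)$ and $\Ic_\Xc(\infty) \leq \Ic_\Yc(\infty)$ separately. The first is immediate from the definitions: since $\Xc = \interval{0}{\alpha} \subseteq \Yc$, every sequence in $\Xc^n$ lies in $\Yc^n$, and because the feasibility constraint (\ref{eq:feasible_energy_requests}) restricts only the induced battery states, $\Xc^n(s_0,\xv) \subseteq \Yc^n(s_0,\xv)$ for every input pair. Hence every feasible policy supported on $\Xc$ is also a feasible policy supported on $\Yc$; with $\Delta=\infty$ the cost constraint is vacuous and $\Gamma(\infty)=\Omega(s_0)$, so minimizing the leakage $\bar{\Ic}$ over the larger class can only lower it, giving $\Ic_\Yc(\infty) \leq \Ic_\Xc(\infty)$.

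For the reverse inequality I would take a policy $P_{\yrv^n|\xrv^n}$ attaining $\Ic_\Yc(\infty)$ and transform it into a policy supported on $\Xc$ by post-composing its output with the map $F_n$ of Lemma \ref{le:outputset}, applied with $\Yc_c = \interval{0}{\alpha} = \Xc$ (whose hypotheses hold since $\interval{0}{\alpha}^n = \Xc^n \subseteq \Yc^n \subseteq \ZZ^n$). Concretely, let $\tilde{P}_{\yrv^n|\xrv^n}$ be the push-forward distribution of $\tilde{\yrv}^n = F_n(\yrv^n)$. First I would verify feasibility: Lemma \ref{le:outputset} guarantees $F_n(\yv) \in \Xc^n(s_0,\xv)$ whenever $\yv \in \Yc^n(s_0,\xv)$, so $\supp(\tilde{P}_{\yrv^n|\xrv^n=\xv}) \subseteq \Xc^n(s_0,\xv)$ and $\tilde{P}$ is a feasible $\Xc$-policy, i.e. $\tilde{P}\in\Gamma_\Xc(\infty)$.

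The leakage bound is the crux. Because the construction of $F_n$ in Lemma \ref{le:outputset} chooses each reallocation amount $d_i \in \interval{0}{(y_i-\alpha)^+}$ as a function of $\yv$ alone, independently of the input pair $(s_0,\xv)$, the map $F_n$ is a fixed deterministic function of the output. Therefore, for every fixed $P_{\xrv^n} \in \Pc_{\xrv^n}$ the Markov chain $\xrv^n \to \yrv^n \to \tilde{\yrv}^n$ holds, and the data processing inequality yields $I(\xrv^n;\tilde{\yrv}^n) \leq I(\xrv^n;\yrv^n)$. Taking the maximum over $P_{\xrv^n}$ gives $\bar{\Ic}(\tilde{P}) \leq \bar{\Ic}(P)$, and since $\tilde{P}\in\Gamma_\Xc(\infty)$ we conclude $\Ic_\Xc(\infty) \leq \bar{\Ic}(\tilde{P}) \leq \bar{\Ic}(P) = \Ic_\Yc(\infty)$. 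Combining the two inequalities proves the claim.

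I expect the main subtlety to be the data processing step, namely confirming that $F_n$ depends only on the request sequence $\yrv^n$ and not on the hidden consumption $\xrv^n$ — this input-independence is exactly what the explicit construction in Lemma \ref{le:outputset} secures, leaving the rest as bookkeeping. It is also worth noting that at $\Delta=\infty$ the cost constraint plays no role, so the energy reallocation performed by $F_n$, which can change $\mv^T\yv$, does not threaten membership in $\Gamma_\Xc(\infty)$.
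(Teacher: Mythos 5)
Your proposal is correct and follows essentially the same route as the paper's own proof: one direction via the inclusion $\Gamma_{\Xc}(\infty) \subseteq \Gamma_{\Yc}(\infty)$, the other by composing an optimal $\Yc$-policy with the map $F_n$ of Lemma \ref{le:outputset} and invoking the data processing inequality on the Markov chain $\xrv^n \to \yrv^n \to F_n(\yrv^n)$. Your explicit remarks that $F_n$ depends on $\yrv^n$ alone (which justifies the Markov chain) and that the cost constraint is vacuous at $\Delta=\infty$ are details the paper leaves implicit, but the argument is the same.
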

\begin{proof}
	Lemma \ref{le:outputset} states the existence of a function ${F_n}: \Yc^n \to \Yc_c^n$ such that if $P_{Y^n|X^n}\in\Gamma(\infty)$ then $F_n\circ P_{Y^n|X^n}\in\Gamma\infty)$. The function ${F_n}$ induces the Markov chain 	
	\begin{IEEEeqnarray}{rCl}
		\xrv^n \to \yrv^n \to {F}(\yrv^n).
	\end{IEEEeqnarray}
	Therefore $I(\xrv^n;{F}_n(\yrv^n)) \leq I(\xrv^n;\yrv^n)$ by the data processing inequality. The converse follows by noting that $\Gamma_{\Xc}(\infty) \subseteq \Gamma_{\Yc}(\infty)$. This completes the proof.
\end{proof}

However, in general the function $F_n$ does not preserve the price paid for the energy, as $\yv$ and $F_n(\yv)$ may yield different energy bills. The following lemma identifies the conditions that guarantee that the energy bill do not change after the application of $F_n$.


\begin{lemma} \label{le:outputset2}
	Define output alphabet $\Yc_c^n = \interval{-{\beta}/{\underline{l}}}{{\beta}/{\underline{l}} + \alpha}^n$ where $\underline{l} = \min_k l_k$ and $l_k$ is the length of the $k$-th market price period as defined in (\ref{eq:market_price_model}). Consider a $\Delta$-feasible battery policy $P_{Y^n|X^n}\in\Gamma(\Delta)$. Then there exist a function $\widehat{F}: \Yc^n \to \Yc_c^n$ such that $F\circ P_{Y^n|X^n}\in\Gamma_c(\Delta)$.
\end{lemma}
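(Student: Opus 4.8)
The plan is to construct $\widehat{F}$ as an energy-request reallocation that is confined to each constant-price block, so that it never moves energy across a price boundary. First I would dispose of the cost constraint. Within the $k$-th block the price is the constant $m_k$, so any map that only redistributes requests among that block's time steps leaves the block sum $\sum_i y_i$ unchanged, hence leaves $m_k\sum_i y_i$ and therefore the whole bill $\mv^\transp\yv$ invariant. Since the benchmark $\yv^*(\xv)$ does not depend on the policy, the cost is unchanged under the original benchmark and can only decrease under the constrained benchmark $\yv^*_c(\xv)$ (as $\Yc_c^n\subseteq\Yc^n$ forces $\mv^\transp\yv^*_c\ge\mv^\transp\yv^*$); either way it stays $\le\Delta$ for every $\xv$. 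This reduces the lemma to a purely geometric per-block claim: on a block of length $l\ge\underline{l}$ one can reallocate requests so that each lands in $\interval{-\beta/\underline{l}}{\beta/\underline{l}+\alpha}$ while preserving feasibility for every $(s_0,\xv)$ with $\yv\in\Yc^n(s_0,\xv)$.

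The tool for the per-block claim is the family of adjacent unit shifts already used in Lemma~\ref{le:outputset}. I would first check that four shift types are feasible independently of $\xv$, each using only the original feasibility. Capping a request above $\beta/\underline{l}+\alpha$ and pushing the excess to the next step is feasible exactly as in Lemma~\ref{le:outputset}, because the capped value still exceeds $\alpha\ge x_i$; pushing the same excess to the previous step is feasible because $s_{i+1}\le\beta$ forces $\beta-s_i\ge y_i-x_i\ge y_i-\alpha$, which bounds the admissible shift. Symmetrically, capping excessive selling at $-\beta/\underline{l}$ and pushing the deficit forward keeps the intervening state at $s_i-\beta/\underline{l}-x_i\le s_i$, while pushing it backward uses $s_{i+1}\ge0$, i.e. $s_i\ge x_i-y_i\ge|y_i|$, to keep the state nonnegative. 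Thus a unit of request can be transported between any two steps of the block by a chain of adjacent feasible shifts.

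The crux is then a counting argument showing that these shifts can be orchestrated to spread the load so that every request reaches the target interval. Writing the block over times $[t,t+l)$, feasibility gives $Y:=\sum_i y_i = s_{t+l}-s_t+\sum_i x_i$ with $s_{t+l},s_t\in\interval{0}{\beta}$ and $\sum_i x_i\in[0,l\alpha]$, so $-\beta\le Y\le l\alpha+\beta$. Because $l\ge\underline{l}$ we have $l\beta/\underline{l}\ge\beta$, whence $-l\beta/\underline{l}\le Y\le l(\beta/\underline{l}+\alpha)$; this is exactly the range in which a distribution of $Y$ over $l$ coordinates each lying in $\interval{-\beta/\underline{l}}{\beta/\underline{l}+\alpha}$ exists. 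Consequently, whenever some coordinate exceeds the upper threshold, the sum constraint $Y\le l(\beta/\underline{l}+\alpha)$ forces another coordinate to lie strictly below it (and dually for lower violations), giving a coordinate with room to receive; transporting one unit of the violation there along a chain of the feasible shifts above strictly decreases the total violation while preserving $Y$ and feasibility. Iterating terminates with every request in range, and composing the per-block maps yields $\widehat{F}$ with $\widehat{F}\circ P_{Y^n|X^n}\in\Gamma_c(\Delta)$.

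The main obstacle is this last, geometric step: one must argue the spreading procedure with only $\xv$-blind adjacent shifts—rather than a cascade that piles the entire battery swing of size up to $\beta$ onto the final step—and verify that, fixed by a deterministic tie-breaking rule so that $\widehat{F}$ is a function of $\yv$ alone, it terminates in range. The quantitative heart is the identity $l\beta/\underline{l}\ge\beta$, which says the $l\ge\underline{l}$ steps supply just enough slack, $\beta/\underline{l}$ per step, to absorb the maximal battery excursion $\beta$ while keeping each request within $\beta/\underline{l}$ of the consumption range $\interval{0}{\alpha}$. A minor technical point I would handle separately is integrality of the thresholds when $\beta/\underline{l}\notin\ZZ$, where the slack is tightest at $l=\underline{l}$ and the unit shifts must be rounded consistently.
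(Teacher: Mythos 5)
Your proposal is correct in substance, but it takes a genuinely more constructive route than the paper's own (very terse) proof, and in doing so it fills a real gap in that proof. The paper argues purely at the level of block sums: the battery state at the market-transition times depends on $\yv$ only through the per-block sums $\norm{\yv^{l_k}}$; feasibility of the original sequence forces $\norm{\yv^{l_k}}\in\interval{-\beta}{\beta+l_k\alpha}$; and exactly such sums are representable with per-step requests in $\interval{-\beta/\underline{l}}{\beta/\underline{l}+\alpha}$ because $l_k\ge\underline{l}$. How the sum is redistributed \emph{inside} a block --- and why the redistributed sequence stays feasible at interior time steps for every $(s_0,\xv)$ compatible with $\yv$ --- is left implicit, with the explicit construction deferred to \cite{AEE_ARXIV_19}; likewise the cost constraint is handled only implicitly (constant price per block makes the bill a function of block sums). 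Your proof supplies precisely these missing pieces: the $\xv$-blind adjacent shifts of Lemma \ref{le:outputset}, confined to each constant-price block so the bill is invariant (and your observation that the constrained benchmark $\yv^*_c$ can only make the cost smaller is the right way to handle the benchmark), together with the counting and routing argument based on $l\beta/\underline{l}\ge\beta$. This extra care is not pedantry: the naive reading of the paper's proof --- spread each block sum evenly over the block --- is actually false, since with $s_t=0$ and both $\xv$ and $\yv$ front-loaded ($x_i=y_i=\alpha$ on the first half of a block, $0$ afterwards) the evened-out sequence drives the interior battery state negative even though all boundary states are fine. Both arguments share the same quantitative heart, namely that $l\ge\underline{l}$ steps give slack $\beta/\underline{l}$ per step to absorb a battery excursion of size $\beta$.

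Two repairs to your write-up. First, the claim that a unit can be transported ``between any two steps of the block'' by feasible shifts is false in general: a unit stalls at a step with $y_j<\alpha$, because the admissible forward shift out of step $j$ is bounded by $(y_j-\alpha)^+$. The fix is to route each violating unit to the \emph{nearest} step lying strictly below the cap (dually, strictly above the floor); then every intermediate step is at or above the cap, hence $\ge\alpha$, so each hand-off is feasible, and depositing at the endpoint strictly decreases the total violation, which gives termination. Second, your integrality caveat is a genuine issue with the lemma's statement rather than with your argument: if $\beta/\underline{l}\notin\ZZ$, the integer alphabet $\interval{-\beta/\underline{l}}{\beta/\underline{l}+\alpha}$ only realizes block sums down to $-\underline{l}\,\floor{\beta/\underline{l}}>-\beta$ on the shortest block, so a feasible original block sum near $-\beta$ cannot be matched; the paper implicitly assumes the thresholds are integers (e.g.\ $\underline{l}$ divides $\beta$) and does not address this either.
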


\begin{proof}
	Note that the battery charging dynamics (\ref{eq:battery_filling}) determine the state of the battery at the market transition times $t_{i+1}=t_k+l_k$ with $i=1, \ldots, k$ as
	\bes
		s_{t_{k+1}}=s_{t_k}-\norm{\xv^{l_k}}+\norm{\yv^{l_k}},
	\ees
	where $s_{t_k}\in\interval{0}{\beta}$ and $\norm{\xv^{l_k}} \in \interval{0}{l_k\alpha}$. Therefore, when $\norm{\yv^{l_k}} \in \interval{-\beta}{\beta+l_k\alpha}$, the battery state $s_{t_{k+1}}$ takes values on $\interval{0}{\beta}$ for any value of the previous state $s_{t_{k}}$. This concludes the proof.
\end{proof}

The lemma above shows that the resulting output sequence $\widehat{F}_n(\yv)$ does not depend on the input pair $(s_0,\xv)$ and instead depends only on the original output sequence $\yv$. This insight leads to the following result.
Lemma \ref{le:ordering} shows that the privacy cost function $\Ic( \Delta)$ does not vary when the EMU operates with a constrained output alphabet $\Yc_c$.
This result is consistent with prior results reported for privacy based on hypothesis testing \cite[Theorem 1]{li2015privacy} and multi-user scenarios \cite[Theorem 2]{gomez2015smart}.

\begin{lemma} \label{le:ordering}
	Define output alphabet $\Yc_c^n = \interval{-{\beta}/{\underline{l}}}{{\beta}/{\underline{l}} + \alpha}^n$ where $\underline{l} = \min_k l_k$ and $l_k$ is the length of the $k$-th market price period as defined in (\ref{eq:market_price_model}). Let $\Ic(\Delta)$ and $\Ic_c(\Delta)$ represent the privacy-cost functions under output alphabets $\Yc^n$ and $\Yc_c^n$ for any output alphabet $\Yc^n\supset\Yc_c^n $. Then 
	\bes
		\Ic_c( \Delta ) = \Ic( \Delta ).
	\ees
\end{lemma}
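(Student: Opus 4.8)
The plan is to prove the two inequalities $\Ic_c(\Delta)\leq\Ic(\Delta)$ and $\Ic(\Delta)\leq\Ic_c(\Delta)$ separately, mirroring the proof of Lemma~\ref{le:ordering0} but now tracking the cost constraint, which was vacuous in the $\Delta=\infty$ case treated there. The two ingredients I would rely on are the alphabet-restricting map $\widehat{F}$ from Lemma~\ref{le:outputset2} and the data processing inequality.

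For $\Ic_c(\Delta)\leq\Ic(\Delta)$, I would take an arbitrary policy $P_{\yrv^n|\xrv^n}\in\Gamma(\Delta)$ over the unrestricted alphabet $\Yc^n$ and push it through the map $\widehat{F}$ of Lemma~\ref{le:outputset2}, forming the composed policy $\widehat{F}\circ P_{\yrv^n|\xrv^n}$. Lemma~\ref{le:outputset2} guarantees this composed policy lies in $\Gamma_c(\Delta)$, so it is an admissible competitor in the minimization defining $\Ic_c(\Delta)$. Because $\widehat{F}$ is a deterministic function of the output sequence alone, as emphasized in the remark following Lemma~\ref{le:outputset2}, it induces the Markov chain $\xrv^n\to\yrv^n\to\widehat{F}(\yrv^n)$. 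The data processing inequality then yields $I(\xrv^n;\widehat{F}(\yrv^n))\leq I(\xrv^n;\yrv^n)$ for each fixed input distribution $P_{\xrv^n}\in\Pc_{\xrv^n}$; since the inequality holds for every member of the family, it is preserved by the maximum over $\Pc_{\xrv^n}$, giving $\bar{\Ic}(\widehat{F}\circ P_{\yrv^n|\xrv^n})\leq\bar{\Ic}(P_{\yrv^n|\xrv^n})$. Hence $\Ic_c(\Delta)\leq\bar{\Ic}(\widehat{F}\circ P_{\yrv^n|\xrv^n})\leq\bar{\Ic}(P_{\yrv^n|\xrv^n})$, and minimizing the right-hand side over $\Gamma(\Delta)$ completes this direction.

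For the reverse inequality $\Ic(\Delta)\leq\Ic_c(\Delta)$, I would show that every feasible $\Delta$-affordable policy over the restricted alphabet is also admissible over the full alphabet, i.e. $\Gamma_c(\Delta)\subseteq\Gamma(\Delta)$; the claim then follows immediately because $\Ic(\Delta)$ minimizes the same objective $\bar{\Ic}$ over a larger set. Feasibility transfers for free, since $\Yc_c^n\subseteq\Yc^n$ gives $\Yc_c^n(s_0,\xv)\subseteq\Yc^n(s_0,\xv)$. The real content is affordability: the cost $g$ is measured against the cheapest feasible request $\yv^*(\xv)$, and this reference sequence could in principle become cheaper when the alphabet is enlarged to $\Yc^n$, which would make the full-alphabet cost strictly larger. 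I would close this gap using the bill-invariance of $\widehat{F}$, namely that $\mv^T\widehat{F}(\yv)=\mv^T\yv$ because $\widehat{F}$ only reshuffles purchases within constant-price blocks; this is precisely why $\Yc_c$ takes the form $\interval{-\beta/\underline{l}}{\beta/\underline{l}+\alpha}$. Applying $\widehat{F}$ to the full-alphabet minimizer produces a sequence in $\Yc_c^n(s_0,\xv)$ with the same bill, so $\min_{\yv\in\Yc_c^n(s_0,\xv)}\mv^T\yv\leq\min_{\yv\in\Yc^n(s_0,\xv)}\mv^T\yv$, while the reverse holds trivially from $\Yc_c^n\subseteq\Yc^n$. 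The two reference costs therefore coincide, $g$ agrees with its restricted counterpart, and $\Gamma_c(\Delta)\subseteq\Gamma(\Delta)$. Combining the two inequalities yields $\Ic_c(\Delta)=\Ic(\Delta)$.

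The step I expect to be the main obstacle is exactly the cost bookkeeping in the reverse inequality: unlike the unconstrained case of Lemma~\ref{le:ordering0}, I must verify that restricting the output alphabet does not move the minimum-bill reference $\yv^*(\xv)$, so that $\Delta$-affordability is genuinely equivalent under the two alphabets. This invariance is what the bill-preservation property of $\widehat{F}$ buys, and confirming it reduces to checking that the within-block reallocation underlying $\widehat{F}$ never alters the total payment, which is the reason the interval endpoints $\pm\beta/\underline{l}$ are chosen as they are.
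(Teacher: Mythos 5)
Your proof follows essentially the same route as the paper's: the alphabet-restricting map of Lemma \ref{le:outputset2} (the paper invokes its full version from \cite{AEE_ARXIV_19}) combined with the data processing inequality gives $\Ic_c(\Delta)\leq\Ic(\Delta)$, and the inclusion $\Gamma_c(\Delta)\subseteq\Gamma(\Delta)$ gives the converse. The only difference is that you explicitly check that the minimum-bill reference $\yv^*(\xv)$ is the same under both alphabets (via the bill-invariance of $\widehat{F}$), a point the paper's one-line converse leaves implicit; this is a worthwhile extra detail within the same argument, not a different approach.
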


\begin{proof}
	Let $\Gamma(\Delta)$ and $\Gamma_c(\Delta)$ denote the set of feasible $\Delta$-affordable battery policies under output alphabets $\Yc^n$ and $\Yc_c^n$. It follows from \cite{AEE_ARXIV_19} that a function ${F}: \Yc^n \to \Yc_c^n$ exists such that if $P_{Y^n|X^n}\in \Gamma(\Delta)$ then $F\circ P_{Y^n|X^n}\in \Gamma_c(\Delta)$. Noting that the function ${F}$ induces the Markov chain 	
	\bes
		\xrv^n \to \yrv^n \to {F}(\yrv^n)
	\ees
	yields $I(\xrv^n;{F}_n(\yrv^n)) \leq I(\xrv^n;\yrv^n)$ by the data processing inequality. 
	The converse follows by noting that $\Gamma_c(\Delta) \subseteq \Gamma(\Delta)$.
\end{proof}


We note that the proof for the existence of the function $F$ presented in \cite{AEE_ARXIV_19} requires forecasting of $\underline{l}$ time steps ahead.

\section{Universal Privacy bounds}
In the following, we bound the information leakage given in Definition \ref{def:privacy_guarantee}. We first study the case for which only the feasibility constraint is imposed.

\begin{theorem} \label{th:IOmega}
	The privacy cost function $\Ic(\infty)$ is bounded by
	\bes
		\frac{1}{n} \floor{\frac{n}{\ceil{\lambda}}} \leq \Ic(\infty) \leq \frac{1}{n} \ceil{\frac{n-\floor{\beta/\alpha}}{\lambda}},
	\ees
	where $\lambda=(\beta+1)/\alpha$.
\end{theorem}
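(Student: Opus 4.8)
The plan is to prove the two bounds separately, each by a coding/counting argument that ties the leakage to the battery capacity through the parameter $\lambda$.

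For the upper bound I would exhibit a single feasible policy whose worst-case leakage meets the bound, which suffices since $\Ic(\infty)$ is a minimum over feasible policies (Definition~\ref{def:privacy_guarantee}, with $\Gamma(\infty)=\Omega(s_0)$). Applying Theorem~\ref{th:covering} with $\Sc_0=\{s_0\}$, so that $\overline{\Sc_0}=\underline{\Sc_0}=s_0$, yields a set $\Vc^n(\{s_0\})\subseteq\Yc^n$ with $\log|\Vc^n(\{s_0\})|\leq\ceil{(n-\floor{\beta/\alpha})/\lambda}$ such that every input pair $(s_0,\xv)$ has a feasible request in $\Vc^n(\{s_0\})$. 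I would then take the deterministic policy mapping each $\xv$ to one such feasible $\vv\in\Vc^n(\{s_0\})$; this policy lies in $\Omega(s_0)$. Since $\yrv^n$ then takes values in $\Vc^n(\{s_0\})$, for every $P_{\xrv^n}$ we have $I(\xrv^n;\yrv^n)\leq H(\yrv^n)\leq\log|\Vc^n(\{s_0\})|$, so the worst-case leakage $\bar{\Ic}$ of this policy is at most $\frac{1}{n}\ceil{(n-\floor{\beta/\alpha})/\lambda}$, which bounds $\Ic(\infty)$ from above.

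For the lower bound I would show that \emph{every} feasible policy leaks at least the stated amount against a suitably chosen input distribution. Using the block construction underlying Theorem~\ref{th:packing}, I would partition the horizon into $\floor{n/\ceil{\lambda}}$ consecutive blocks of length $\ceil{\lambda}$ and restrict the input in each block to the two sequences $\wv',\wv''$ with $\norm{\wv'}=0$ and $\norm{\wv''}=\ceil{\lambda}\alpha>\beta$. By Lemma~\ref{le:shared_request_iff}, a block started from a known battery state admits no request feasible for both $\wv'$ and $\wv''$, so observing a block's output reveals both its binary choice and the battery state entering the next block; applying this decoding recursively shows that the full output $\yrv^n$ determines all $\floor{n/\ceil{\lambda}}$ binary choices, and this holds regardless of which feasible policy produced $\yrv^n$. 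Taking $P_{\xrv^n}$ uniform over these $2^{\floor{n/\ceil{\lambda}}}$ input sequences therefore gives $H(\xrv^n\mid\yrv^n)=0$, whence $I(\xrv^n;\yrv^n)=H(\xrv^n)=\floor{n/\ceil{\lambda}}$ for every policy in $\Omega(s_0)$. Consequently $\bar{\Ic}(P_{\yrv^n|\xrv^n})\geq\frac{1}{n}\floor{n/\ceil{\lambda}}$ for all feasible policies, and in particular for the minimizer defining $\Ic(\infty)$.

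The main obstacle is making the lower-bound decoding genuinely policy-independent and getting the block count exactly right. The decoding must invoke only feasibility, so that the disjointness of the per-block feasible sets, conditioned on the decoded entering state, forces $H(\xrv^n\mid\yrv^n)=0$ for \emph{every} feasible policy rather than merely the optimal one; and the blocks must be counted so that precisely $\floor{n/\ceil{\lambda}}$ full blocks of length $\ceil{\lambda}$ fit, with any leftover tail of length less than $\ceil{\lambda}$ fixed to an arbitrary feasible value and contributing nothing to the entropy. I would also keep the information units consistent with the base-two logarithm implicit in Theorems~\ref{th:covering} and~\ref{th:packing}, so that the entropy of the uniform input equals $\floor{n/\ceil{\lambda}}$ and matches the cardinality bounds directly.
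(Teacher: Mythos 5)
Your proposal is correct and takes essentially the same approach as the paper: the upper bound by applying Theorem~\ref{th:covering} with $\Sc_0=\{s_0\}$ and using a policy whose outputs lie in $\Vc^n(\{s_0\})$, and the lower bound by putting a uniform input distribution on a packing set for which $H(\wrv^n\mid\yrv^n)=0$ under any feasible policy. Your inline unpacking of the lower bound into $\floor{n/\ceil{\lambda}}$ binary blocks decoded recursively via Lemma~\ref{le:shared_request_iff} is simply a self-contained rendering of what the paper obtains by citing Theorem~\ref{th:packing} directly.
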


\begin{proof}
	\emph{Upper bound.} Theorem \ref{th:covering} shows the existence of a set $\Vc^n(\{s_0\})$ with cardinality bounded by 
	\bes
		\log |\Vc^n(\{s_0\})| \leq \ceil{ \frac{n- \floor{(\beta+s_0-s_0)/\alpha} }{\lambda} } = \ceil{\frac{n-\floor{\beta/\alpha}}{\lambda}}, \IEEEeqnarraynumspace \supersqueezedequation
	\ees
	such that the intersection $\Vc^n(\{s_0\})\cap\Yc(s_0,\xv)$ is not empty for every input pair $(s_0,\xv)$. Letting the output $\yrv^n$ take values in $\Vc^n(\{s_0\})\cap\Yc(s_0,\xv)$ completes the proof.

	\emph{Lower bound.} Theorem \ref{th:packing} shows that there exists a set $\Wc^n = \Wc^n(\{s_0\},\Sc)$ with cardinality bounded by
	\bes
		\log |\Wc^n| \geq \floor{\frac{n}{\ceil{\lambda}}},
	\ees
	such that no two sequences in $\Wc^n$ share a common output sequence, i.e. $H(\wrv^n|\yrv^n) = 0$. Letting $\wrv^n$ take uniformly distributed values over $\Wc^n$ completes the proof.
\end{proof}

Note that for a sampling period $T_0$ and a maximum power consumption $\hat{w} = \alpha/T_0$, the total amount of information leaked during a time interval $T=n T_0$ is bounded by
\bes
	n\Ic(\infty) \leq \floor{\frac{n}{{\lambda}}} = \floor{\frac{T/T_0}{(\beta+1)/(\hat{w}T_0)}} = \floor{\frac{T\hat{w}}{\beta+1}}.
\ees
Thus the upper bound is independent of the sampling period $T_0$, i.e. sampling periods under $T_0=(\beta+1)/\hat{w}$ does not increase the privacy guarantee $\Ic(\infty)$. For integer values of $\lambda$, both bounds on Lemma \ref{th:IOmega} coincide, providing the exact value of the privacy guarantee $n\Ic(\infty)=\floor{{n}/{{\lambda}}}$. Consequently, the step behaviour of the privacy guarantee when $n$ increases, is not an aberration introduced by the tools used in this paper, but the real behaviour of the system. 

Theorem \ref{th:I_non_stingy_user} presents our main result, where we bound the information leakage for arbitrary cost constraints $\Delta$. The proof proceeds by constructing a battery policy that combines two components for every request sequence. One of the components guarantees the feasibility constraint, while the other guarantees the cost constraint.

\newcommand{\maxpx}{\max_{\Pc_{\xrv^n}}}
\newcommand{\sto}{\srv_{\!\,\omega}}
\newcommand{\stg}{\srv_{\!\,\gamma}}
\newcommand{\vno}{\vrv^n_{\!\omega}}
\newcommand{\vng}{\vrv^n_{\!\gamma}}

\newcommand{\stos}{\hat{\srv}_{\!\,\omega}}
\newcommand{\stgs}{\hat{\srv}_{\!\,\gamma}}

\begin{theorem} \label{th:I_non_stingy_user}
	Consider an EMU with battery capacity $\beta$, initial state $s_0$, market price $\mv$, and output alphabet $\Yc^n$ satisfying $\Yc^n_c \subseteq \Yc^n$ with $\Yc^n_c$ defined in Lemma \ref{le:ordering}, then
	\bes
	\label{eq:upper_bound}
		\Ic( \Delta ) \leq \Ic( \infty ) + \Ic_\Gamma(\Delta),
	\ees
	where
	\bes \label{eq:multiletter}
		 \Ic_\Gamma(\Delta)= \min_{P_{\stgs|\stos} \in \Gamma_\omega(\Delta)} \max_{P_{\stos}\in\Pc_{\stos}} \frac{1}{n} I(\stgs-\stos;\stos). \IEEEeqnarraynumspace
	\ees
	Here $\stos$ and $\stgs$ are random processes in $\interval{0}{\beta}^K$ with joint distribution determined by
	\bes \label{eq:gammaprime}
	 	\Gamma_\omega(\Delta) = \left\{ P_{\stgs|\stos} : \EE(\stgs\deltav) \leq \Delta - \beta\norm{(\deltav)^+} \right\}\!, \IEEEeqnarraynumspace
	\ees
	where $ \deltav\in\ZZ^K$ denotes the vector of market price differences, with entries given by $\deltav_0=-m_0$, $\deltav_k = m_{k-1}-m_{k}$ for $k=1, 2, \ldots, K-1$ and $ \deltav_K=m_{K-1}$.
\end{theorem}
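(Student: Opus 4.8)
The plan is to exhibit a single feasible $\Delta$-affordable policy whose leakage splits into a feasibility part, controlled by $\Ic(\infty)$, and a cost part, controlled by $\Ic_\Gamma(\Delta)$. By Lemma \ref{le:ordering} we may work over the constrained alphabet $\Yc_c^n$, on which Lemma \ref{le:outputset2} lets us drive the battery state at each market transition $t_k$ to any value in $\interval{0}{\beta}$ independently of the state at $t_{k-1}$. First I would take as a backbone the feasibility request $\vno$ of Theorem \ref{th:covering}: it lies in a set $\Vc^n(\{s_0\})$ of size at most $2^{n\Ic(\infty)}$ and is feasible for every input pair. Let $\stos$ collect the battery states it induces at the transition times $t_1,\dots,t_K$. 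The combined policy then spends, within each block, some additional energy so as to move the transition states from $\stos$ to a target vector $\stgs$ drawn through a law $P_{\stgs\mid\stos}$ that I will optimise; placing that extra spending by a fixed within-block rule makes the final request $\yrv^n$ a deterministic function of the pair $(\vno,\stgs-\stos)$.

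Second I would recast the cost constraint as the linear condition defining $\Gamma_\omega(\Delta)$. With $X_k$ and $Y_k$ the consumption and request totals in block $k$, the dynamics (\ref{eq:battery_filling}) give $Y_k=s_{t_{k+1}}-s_{t_k}+X_k$, so $\mv^\transp\yv=\sum_k m_k(s_{t_{k+1}}-s_{t_k})+\sum_k m_k X_k$. A summation by parts rewrites the first sum as $\sum_{k=1}^{K}\deltav_k s_{t_k}-m_0 s_0$, where $\deltav$ is exactly the difference vector of the statement; the remaining terms and the fixed $s_{t_0}=s_0$ do not depend on the policy. Hence $g(\yrv^n,\xv)$ equals $\EE(\stgs\deltav)$ minus its value at the cost-minimising reference $\yv^{*}(\xv)$, and since that reference drives the free transition states to the extremes of $\interval{0}{\beta}$ it produces the deterministic offset appearing as $\beta\norm{(\deltav)^+}$ in (\ref{eq:gammaprime}). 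The requirement $g\le\Delta$ for every $\xv$ therefore becomes $\EE(\stgs\deltav)\le\Delta-\beta\norm{(\deltav)^+}$, i.e. $P_{\stgs\mid\stos}\in\Gamma_\omega(\Delta)$.

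Third I would bound the leakage. Because $\yrv^n$ is a function of $(\vno,\stgs-\stos)$, the chain rule gives $I(\xrv^n;\yrv^n)\le I(\xrv^n;\vno)+I(\xrv^n;\stgs-\stos\mid\vno)$. The first term is bounded by $H(\vno)\le\log|\Vc^n(\{s_0\})|$, i.e. by $n\,\Ic(\infty)$, since the covering request of Theorem \ref{th:covering} attains the feasibility-only guarantee of Theorem \ref{th:IOmega}. In the second term the target $\stgs$ is selected from $\stos$ alone, while $\stos$ is a deterministic function of $(\xrv^n,\vno)$; this yields the Markov chain $\xrv^n\to\stos\to(\stgs-\stos)$ given $\vno$, so data processing bounds the term by $I(\stgs-\stos;\stos)$. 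Dividing by $n$, using $\maxpx(A+B)\le\maxpx A+\maxpx B$, and optimising $P_{\stgs\mid\stos}$ over $\Gamma_\omega(\Delta)$ yields $\tfrac{1}{n}I(\xrv^n;\yrv^n)\le\Ic(\infty)+\Ic_\Gamma(\Delta)$; minimising over the policy gives (\ref{eq:upper_bound}).

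The main obstacle is the reduction of the genuinely $n$-letter quantity $I(\xrv^n;\stgs-\stos\mid\vno)$ to the $K$-letter expression (\ref{eq:multiletter}). One must verify that the cost adjustment really can be made to depend on $\xrv^n$ only through the transition states $\stos$, which is precisely where the free reachability of Lemma \ref{le:outputset2} and the enlarged alphabet $\Yc_c^n$ are needed, and that the map $(\vno,\stgs-\stos)\mapsto\yrv^n$ is lossless so that no leakage is double counted. Tied to this is the bookkeeping of the offset $\beta\norm{(\deltav)^+}$, which must dominate the reference cost uniformly in $\xv$ so that a single $\Delta$-affordability condition on $\stgs$ suffices for all consumption sequences.
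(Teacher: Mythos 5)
Your construction follows the paper's own proof almost step for step: the same two-component policy $\yrv^n=\vno+\vng$ with the cost correction concentrated at the market transitions and generated by a kernel $P_{\stgs|\stos}$ acting on the transition states alone, the same summation-by-parts cost identity $g(\yrv^n,\xv)=\EE[\stgs\deltav]+\beta\norm{(\deltav)^+}$ (in which the $\mv^\transp\xv$ and $\stos$ terms cancel, so uniformity in $\xv$ is automatic), and the same chain-rule plus conditional-independence step giving $I(\xrv^n;\yrv^n)\le I(\xrv^n;\vno)+I(\stgs-\stos;\stos)$, which is precisely how the paper collapses the $n$-letter quantity to the $K$-letter expression in (\ref{eq:multiletter}). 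Your only structural deviation is cosmetic: you work over $\Yc_c^n$ and spread the correction inside each block via Lemma \ref{le:outputset2}, whereas the paper works over $\ZZ^n$ (where the correction fits in a single step) and then invokes Lemma \ref{le:ordering}; these are interchangeable. Also, your worry that the map $(\vno,\stgs-\stos)\mapsto\yrv^n$ must be lossless is unnecessary: data processing only needs $\yrv^n$ to be a \emph{function} of the pair, since every step is an inequality in the favorable direction.

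There is, however, one genuine slip: you anchor the feasibility backbone to the specific covering request of Theorem \ref{th:covering} and assert that its leakage is at most $n\Ic(\infty)$ ``since the covering request attains the feasibility-only guarantee of Theorem \ref{th:IOmega}.'' That identification is not available inside this paper. Theorem \ref{th:IOmega} only gives $n\Ic(\infty)\le\ceil{(n-\floor{\beta/\alpha})/\lambda}$, its lower bound $\floor{n/\ceil{\lambda}}$ matches only when $\lambda$ is an integer, and tightness of the covering bound is merely cited from \cite{AEE_ARXIV_19}. As written, your argument therefore proves $\Ic(\Delta)\le\frac{1}{n}\ceil{(n-\floor{\beta/\alpha})/\lambda}+\Ic_\Gamma(\Delta)$, which is weaker than (\ref{eq:upper_bound}) whenever $\Ic(\infty)$ lies strictly below the covering exponent. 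The repair is exactly what the paper does: keep the backbone as an \emph{arbitrary} $\omega\in\Omega(s_0)$ --- nothing in your cost identity or leakage decomposition uses which $\omega$ is chosen --- and minimize over $\omega$ at the very end, so that the first term becomes $\min_{\omega\in\Omega(s_0)}\maxpx\frac{1}{n}I(\xrv^n;\vno)$, which equals $\Ic(\infty)$ by Definition \ref{def:privacy_guarantee} because $\Gamma(\infty)=\Omega(s_0)$.
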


\begin{proof}
	We prove the result for $\Yc^n = \ZZ^n$; Lemma \ref{le:ordering} generalizes the proof for every $\Yc^n$ satisfying $\Yc_c^n \subseteq \Yc^n$.
	The proof follows by dividing the optimization process into two steps.
	In the first step, we present a battery policy $\omega$ such that the resulting request sequence $\vrv^n_\omega$ satisfies the power outage and energy waste constraints, i.e., $\omega \in \Omega(s_0)$ as defined in (\ref{eq:omega_policies}).
	These policies are discussed on Theorem \ref{th:IOmega}.
	In the second step, we define a random vector $\vrv^n_\gamma$ such that $\yrv^n = \vrv^n_\omega + \vrv^n_\gamma$ also satisfies the cost constraints.
	Specifically, we set
	\bes
		\vng = \sum_{t\in\Tc} \Big( (\ev_{t} - \ev_{t+1}) (\stg - \sto )_{t} \Big), \label{eq:v_gamma}
	\ees
	where $\Tc$ denotes the ordered set of time steps at which a market transition takes place, i.e., $\Tc=\{~\!0,~\!l_0,~\!l_0\!+\!l_1,\ldots,~\!n\!-\!1\}$.
	This implies that
	\bes
		g(\yrv^n, \xv) &= \EE[ (\stg)_\Tc\deltav +\mv^T\xv -  \mv^T\yv^*(\xv) ] \label{eq:gs}\\
		&= \EE[(\stg)_\Tc\deltav] + \beta\norm{(\deltav)^+}, \label{eq:gs2}
	\ees
	where (\ref{eq:gs}) follows by (\ref{eq:v_gamma}) and the battery charging dynamics (\ref{eq:battery_filling}) and (\ref{eq:gs2}) follow by noting that $\Yc^n = \ZZ^n$. Selecting the transformation $\gamma$ determining $(\stg)_\Tc$ from the set described in (\ref{eq:gammaprime}) yields
	\bes
		I(\xrv&^n;\yrv^n) \leq  I(\xrv^n;\vno)  + I(\xrv^n;\vng|\vno)\\
		&= I(\xrv^n;\vno)  + H(\vng|\vno) - H(\vng|\vno,\xrv^n,\sto) \IEEEeqnarraynumspace  \label{eq:condional}\\
		&= I(\xrv^n;\vno)  + H(\stg-\sto|\vno) - H(\stg-\sto|\sto) \IEEEeqnarraynumspace \label{eq:condional2} \\
		&\leq I(\xrv^n;\vno)  + I(\stg-\sto;\sto),
	\ees
	where (\ref{eq:condional}) follows as $\xrv^n$ and $\vno$ determine $\sto$ by the battery charging dynamics (\ref{eq:battery_filling}); (\ref{eq:condional2}) follows by (\ref{eq:v_gamma}) and noting that $\stg-\sto$ is independent of $\vno$ and $\xrv^n$ given $\sto$. Thus
	\bes
		n&\Ic( \Delta )
		= \!\!\!\min_{P_{\yrv^n|\xrv^n} \in \Gamma(\Delta) } \maxpx I(\xrv^n;\yrv^n)\\
		&\leq \!\!\!\min_{ \gamma \in \Gamma_\omega(\Delta) } \min_{\omega\in\Omega(s_0)} \maxpx \Big(I(\xrv^n;\vno) + I(\stg-\sto;\sto) \Big) \IEEEeqnarraynumspace \\
		%
		%
		%
		&\leq\!\!\! \min_{\omega\in\Omega(s_0)} \max_{\Pc_{{\xrv}^n}}~\! I({\xrv}^n;{\vrv}^n_{\!\omega})  + \!\!\!\min_{ \gamma \in \Gamma_\omega(\Delta) } \max_{\Pc_{\srv_\omega}} ~\! I(\stg-\sto;\sto). \squeezedequation  \IEEEeqnarraynumspace \label{eq:IOmega_IGamma}
	\ees
	This completes the proof.
\end{proof}

While direct computation of the information leakage in (\ref{eq:privacy_guarantee}) relies on finding an $n$-dimensional joint distribution satisfiying $\Gamma(\Delta)$, the bound presented in 
(\ref{eq:upper_bound})
 relies on a $K$-dimensional distribution and the simplified version of $\Gamma(\Delta)$ defined in (\ref{eq:gammaprime}). This significantly eases the computation of the information leakage as described in Section \ref{sec:numerical}.
\newcommand{\Deltamax}{\Delta_{\max}}
Note also that (\ref{eq:multiletter}) implies that $\Ic_\Gamma(0) \leq |\Sc_\omega| = {K}/{n} \log_2 (\beta+1)$ and $\Ic_\Gamma(\Delta)= 0$ for any $\Delta \geq \Deltamax$ with $ \Deltamax = \beta\|\deltav\|_1 -\beta m_0$.
Interestingly, a time-sharing argument  presented in \cite{AEE_ARXIV_19} yields
\bes \label{eq:singleletter}
	\Ic(\Delta) \leq \frac{1}{n}\ceil{\frac{n-\floor{\beta/\alpha}}{\lambda}} + \left(1-\frac{\Delta}{\Deltamax}\right)^{\!\!+} \frac{K}{n} \log_2 (\beta+1).  \IEEEeqnarraynumspace \squeezedequation
\ees

\subsection{Worst case consumption proccess}

\begin{theorem} \label{co:selling}
	Let the output alphabet $\Yc^n$ satisfy $\Yc^n_c\subseteq\Yc^n$ with $\Yc^n_c$ defined in Lemma \ref{le:outputset2}, then the privacy guarantee $\Ic(\Delta)$ as defined by Definition \ref{def:privacy_guarantee}, is bounded by
	\bes
		\Ic(\infty)  + (\Ic_{\Gamma}^{{l'}} - \gamma)^+ \leq \Ic(\Delta),
	\ees
	where ${l_k'} = l_k-\floor{l_k/\ceil{\lambda}-1}^+$, $\Ic_{\Gamma}^{{l'}}$ is defined by Definition \ref{def:market_capacity} and $\gamma = \sum {l_k'}/\lambda$.
\end{theorem}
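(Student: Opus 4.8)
The plan is to prove this lower bound by a converse argument built on a single \emph{worst-case} consumption law. I construct $P^*_{\xrv^n}\in\Pc_{\xrv^n}$, which the universal family $\Pc_{\xrv^n}$ is assumed to contain, such that \emph{every} feasible $\Delta$-affordable policy $P_{\yrv^n|\xrv^n}\in\Gamma(\Delta)$ leaks at least the claimed amount when the input is distributed as $P^*$. Because the outer maximization in Definition~\ref{def:privacy_measure} dominates any fixed input law, it suffices to establish a lower bound on $\frac1n I(\xrv^n;\yrv^n)$ evaluated at $P^*$ that holds uniformly over $P_{\yrv^n|\xrv^n}\in\Gamma(\Delta)$; such a bound carries over to $\Ic(\Delta)$ after the minimization over $\Gamma(\Delta)$. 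By Lemma~\ref{le:outputset2} I may take $\Yc^n=\Yc^n_c$ without loss of generality, which fixes the request range at each market transition.

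I build $P^*$ block by block over the $K$ market periods and make it leak along two disjoint channels. Within the $k$-th block I interleave the two-sequence packing sub-blocks of Theorem~\ref{th:packing}, each of length $\ceil{\lambda}$ and with pairwise disjoint feasible output sets, with a consumption profile on the complementary coordinates chosen to stress the budget $g(\yrv^n,\xv)\le\Delta$. On the packing coordinates, disjointness forces $H(\wrv^n\mid\yrv^n)=0$, so those coordinates leak their full entropy; aggregated over the horizon this reproduces the feasibility lower bound $\Ic(\infty)$ of Theorem~\ref{th:IOmega}. The count $\hat{\kappa}_k=\floor{l_k/\ceil{\lambda}-1}^+$ of packing sub-blocks fitting inside block $k$, and hence the effective length $l_k'$ that remains for the cost profile, is exactly the bookkeeping recorded in the statement.

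For the cost channel I reverse the achievability argument of Theorem~\ref{th:I_non_stingy_user}. There the split $\yrv^n=\vno+\vng$ with $\vng$ given by (\ref{eq:v_gamma}) produced the identity $g(\yrv^n,\xv)=\EE[(\stg)_\Tc\deltav]+\beta\norm{(\deltav)^+}$, so that honouring the budget $g\le\Delta$ amounts to steering the battery states $(\stg)_\Tc$ at the transitions inside $\Gamma_\omega(\Delta)$. Read in the converse direction, this shows that any policy in $\Gamma(\Delta)$ must resolve at least as much battery-state information across the transitions as the market-capacity quantity $\Ic_\Gamma^{l'}$ of Definition~\ref{def:market_capacity} permits, so the cost coordinates leak at least $\Ic_\Gamma^{l'}$. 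Since the feasibility term $\Ic(\infty)$ already charges $\gamma=\sum_k l_k'/\lambda$ bits to these same cost coordinates, I subtract $\gamma$ to remove the double count, and the positive part $(\cdot)^+$ keeps the estimate valid when the feasibility leakage dominates. Adding the two channels yields $\Ic(\infty)+(\Ic_\Gamma^{l'}-\gamma)^+\le\Ic(\Delta)$.

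The main obstacle is the cost converse. In achievability one is free to \emph{choose} the decomposition $\yrv^n=\vno+\vng$ and the conditional independence of $\stg-\sto$ from $(\vno,\xrv^n)$ given $\sto$ used in (\ref{eq:condional})--(\ref{eq:condional2}); in the converse no such structure may be assumed, since the bound must hold for an arbitrary $P_{\yrv^n|\xrv^n}\in\Gamma(\Delta)$, and it must instead be \emph{forced} by the design of $P^*$. The delicate point is to show that the cost-induced and feasibility-induced leakages \emph{add}, rather than merely each holding in isolation; this requires proving that their overlap is exactly the $\gamma$ bits carried by the reserved packing sub-blocks, which is what the accounting $l_k'=l_k-\floor{l_k/\ceil{\lambda}-1}^+$ is engineered to track. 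Controlling this overlap tightly, so that the subtraction is exact rather than a loose estimate, is where the real work lies.
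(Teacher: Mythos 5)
Your overall architecture coincides with the paper's proof: the paper also constructs a worst-case input process, partitions each market-price block $l_k$ into packing sub-blocks of length $\ceil{\lambda}$ built from Theorem \ref{th:packing} plus residual ``cost'' coordinates of length $l_k'$, and splits the leakage with the chain rule, $I(\wrv^n;\yrv^n)=\sum_k I(\wrv_{\Omega}^{l_k};\yrv^n|\cdot)+\sum_k I(\wrv_{\Gamma}^{l_k};\yrv^n|\cdot)$, so that the packing part yields the feasibility term (each packing term equals $\floor{l_k/\ceil{\lambda}-1}^+$ bits because, by Theorem \ref{th:packing}, the output determines the packing sub-sequence) and the residual part yields $\Ic_{\Gamma}^{l'}$.

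The genuine gap is in your cost step. You propose to obtain the term $\Ic_{\Gamma}^{l'}$ by ``reversing'' the achievability argument of Theorem \ref{th:I_non_stingy_user}; but that argument hinges on the particular decomposition $\yrv^n=\vrv^n_{\omega}+\vrv^n_{\gamma}$ of (\ref{eq:v_gamma}) and on the conditional independence of $\hat{\srv}_{\gamma}-\hat{\srv}_{\omega}$ from $(\vrv^n_{\omega},\xrv^n)$ given $\hat{\srv}_{\omega}$ used in (\ref{eq:condional})--(\ref{eq:condional2}), which are properties of a policy the prover \emph{constructs}. An arbitrary $P_{\yrv^n|\xrv^n}\in\Gamma(\Delta)$ admits no such structure, and you concede you cannot force it, deferring exactly this point as ``where the real work lies'' --- i.e., the load-bearing step of the converse is asserted, not proven. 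The paper needs no reversal: $\Ic_{\Gamma}^{l'}$ is \emph{defined} (Definition \ref{def:market_capacity}, the market capacity) as the min-max over $\Delta$-affordable policies of the residual-coordinate conditional mutual information, so once the chain rule splits the leakage, the cost term is bounded below by $\Ic_{\Gamma}^{l'}$ essentially by definition of that quantity evaluated at the constructed input. Similarly, your requirement that the ``overlap'' be shown \emph{exactly} equal to $\gamma$ is not how the bound closes: $\gamma=\sum_k l_k'/\lambda$ is a one-sided arithmetic correction for the packing bits forfeited on the residual coordinates ($\sum_k\floor{l_k/\ceil{\lambda}-1}^+$ falls short of $n\Ic(\infty)$ by at most $\gamma$), and the positive part $(\cdot)^+$ absorbs the case $\Ic_{\Gamma}^{l'}<\gamma$; no tightness argument is required. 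As written, your proposal identifies the correct two-channel architecture but leaves both of its essential steps unestablished, and the specific route proposed for the first (inverting an achievability construction) would not go through.
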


\begin{proof}
	We prove the result by constructing a random process $\wrv^n$ that achieves the lower bound. Let the input alphabet $\Wc^n\subseteq\Xc^{n}$ be divided according to the market price partitioning, i.e. $\Wc^n=\Wc^{l_1}\times\Wc^{l_2}\times...\times\Wc^{l_K}$. Where each set $\Wc^{l_k}$ is divided in two, i.e. $\Wc^{l_k} = \Wc_{\Omega,k} \times \Wc_{\Gamma}^{l_k-\kappa_k\lambda}$ with $\Wc_{\Omega,k} = {\kappa_k\lambda}$  $\kappa_k = \floor{(l_k+1/\alpha)/{\lambda}-1}^+$. Letting the random processes $\wrv^n_{\Omega}$ and $\wrv_{\Gamma}^n$ take values in $\Wc^n_{\Omega}$ and $\Wc^n_{\Gamma}$ implies that
	\begin{IEEEeqnarray}{ll}
		I(\wrv^n;\yrv^n) &= \sum_{k=1}^K I(\wrv_{\Omega}^{l_k};\yrv^n|\wrv^{l_k})
		+ \sum_{k=1}^K I(\wrv_{\Gamma}^{l_k};\yrv^n|\wrv^{l_k}), \IEEEeqnarraynumspace \supersqueezedequation
	\ees
	by the chain rule. For the first term, it holds that
	\begin{IEEEeqnarray}{ll}
		I(\wrv_{\Omega}^{l_k};\yrv^n|\wrv^{l_k}) &= H(\wrv_{\Omega}^{l_k}|\wrv^{l_k}) - H(\wrv_{\Omega}^{l_k}|\wrv^{l_k},\yrv^n) \IEEEeqnarraynumspace \\
		&= \floor{\frac{l_k}{\ceil{\lambda}}-1}^+,
	\ees
	since by Theorem \ref{th:packing} it holds that $\wrv^{l_k}_{\Omega}$ is uniquely determined by $\wrv^{l_k}$ and $\yrv^n$.
	For the second term, it holds that
	\begin{IEEEeqnarray}{ll}
		\min_{P_{\yrv^n|\xrv^n}\in\Pi(\Yc,\Delta)} \max_{P_{\xrv^n}\in\Pc_{\xrv^n}} \sum_{k=0}^{K} I(\wrv_{\Gamma}^{l_k};\yrv^n|\wrv^{l_k}) &= \Ic_{\Gamma}^{{l_k}}.
	\ees
	This completes the proof.
\end{proof}


\section{Numerical results} \label{sec:numerical}

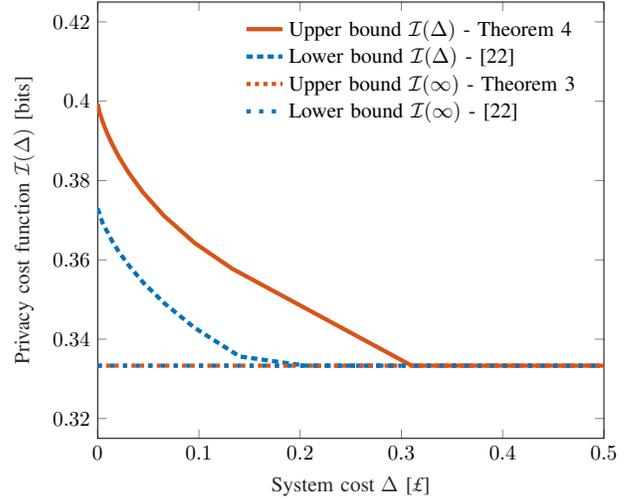
\begin{figure}[t]
	\centering
	\resizebox {0.95\columnwidth} {!} {
%
%
\definecolor{mycolor1}{rgb}{0.00000,0.44700,0.74100}%
\definecolor{mycolor2}{rgb}{0.85000,0.32500,0.09800}%
\begin{tikzpicture}

\begin{axis}[%
xticklabel style={
  /pgf/number format/precision=3,
  /pgf/number format/fixed},
scale only axis,
xmin=0,
xmax=0.5,
xlabel style={font=\color{white!15!black}},
xlabel={$\text{System cost}\;\Delta\text{ [\pounds]}$},
ymin=0.315,
ymax=0.425,
ylabel style={font=\color{white!15!black}},
ylabel={Privacy cost function $\Ic( \Delta )$ [bits]},
axis background/.style={fill=white},
legend style={legend cell align=left, align=left, fill=none, draw=none}
]

\addplot [color=mycolor2, solid, line width=2.0pt]
  table[row sep=crcr]{%
  0	0.399359399971943\\
0.00146265337288939	0.397678295574204\\
0.0021393548892247	0.397053905881144\\
0.0031291346445319	0.396214538367131\\
0.00457683934204961	0.395096496569044\\
0.0066943295008217	0.393624540397733\\
0.00979148362360977	0.391686929298409\\
0.014321546547664	0.389179780324062\\
0.0209474583629331	0.385975725319744\\
0.0306388706280041	0.381955928442367\\
0.0448140474655717	0.377015108570382\\
0.0655474176783434	0.371096216046112\\
0.0958731515514183	0.364250913762259\\
0.132814896985696	0.357791050829485\\
0.3	0.334734870318246\\
0.31	0.333333333333333\\
0.5	0.333333333333333\\
};
\addlegendentry{Upper bound $\Ic( \Delta )$ - Theorem \ref{th:I_non_stingy_user}}

\addplot [color=mycolor1, densely dashed, line width=2.0pt]
  table[row sep=crcr]{%
0	0.373011127013577\\
0.00146265337288939	0.371882962266677\\
0.0021393548892247	0.37125712929101\\
0.0031291346445319	0.370587564809151\\
0.00457683934204961	0.369487507087263\\
0.0066943295008217	0.368367783514192\\
0.00979148362360977	0.366971042795822\\
0.014321546547664	0.364717044927144\\
0.0209474583629331	0.362024859143028\\
0.0306388706280041	0.358639123885903\\
0.0448140474655717	0.354330133368445\\
0.0655474176783434	0.349131291943864\\
0.0958731515514183	0.342843059742629\\
0.140229188486217	0.335652080203089\\
0.205106695516907	0.333333333333333\\
0.3	0.333333333333333\\
};
\addlegendentry{Lower bound $\Ic( \Delta )$ - \cite{AEE_ARXIV_19}}

\addplot [color=mycolor2, dotted, line width=2.0pt]
  table[row sep=crcr]{%
0	0.333333333333333\\
0.5	0.333333333333333\\
};
\addlegendentry{Upper bound $\Ic( \infty )$ - Theorem \ref{th:IOmega} }

\addplot [color=mycolor1, loosely dotted, line width=2.0pt]
  table[row sep=crcr]{%
0	0.333333333333333\\
0.5	0.333333333333333\\
};
\addlegendentry{Lower bound $\Ic( \infty )$ - \cite{AEE_ARXIV_19}}

\end{axis}
\end{tikzpicture}%
	}
	\label{fig:new_bounds}
	\caption{Upper and lower bounds on the privacy cost function as a function of the privacy budget.}
\end{figure}

In this section, we numerically assess the upper bounds on the privacy cost described in Theorem \ref{th:IOmega} and Theorem \ref{th:I_non_stingy_user}. For comparison purposes, we also include the lower bounds on the privacy cost given in \cite{AEE_ARXIV_19}.  We model the market price after the UK Economy 7 tariff, where users are charged an off-peak price of 0.071 \pounds/kWh within a 7 hour block and a peak price of 0.152 \pounds/kWh otherwise \cite{ukEconomy7}. We assume the user has an LG Chem RESU 6.5 battery with a capacity of $4.2$ kWh and a peak power of 4.2 kW. For simplicity we match the users' maximum power consumption to the peak power of the battery, i.e., $4.2$ kW \cite{GDH_SPM_18}. The SM sends the UP integrated energy readings every 30 min following UK specifications for SMs \cite{GDH_SPM_18}. Thus, we set the time elapsed between time steps $i$ and $i+1$ to 30 min. Defining 2.1 kWh as 1 unit of energy yields the following parameters in our system model: battery capacity $\beta= 4.2\;\textnormal{kWh}/2.1\;\textnormal{kWh} = 2$; maximum consumption between time steps $\alpha= 4.2\;\textnormal{kW} \times 0.5\;\textnormal{h}/2.1\;\textnormal{kWh} = 1$; market lengths $l_0 = 7\;\textnormal{h} /0.5\;\textnormal{h} = 14$ and $l_1= 17\;\textnormal{h}/0.5\;\textnormal{h} = 34$; corresponding market prices of $m_0=0.152 \;\pounds/\;\textnormal{kWh} \times 2.1\;\textnormal{kWh} =0.3192$ \pounds\; and $m_1=0.071 \pounds/\;\textnormal{kWh} \times 2.1\;\textnormal{kWh} =  0.1791$ \pounds\; per unit of energy.

Figure 2 depicts the bounds on the privacy cost $\Ic(\Delta)$ for different values of the system cost $\Delta$ and initial battery state $s_0 = 0$ during a one day period, i.e. $n=24\;\textnormal{h}/0.5\;\textnormal{h} = 48$. Following (\ref{eq:singleletter}), when the user does not wish to increase the system cost for privacy, the privacy cost is bounded by $\Ic(0) = 0.4\;\textrm{bits}$. 
For large values of the system cost $\Delta$ 
the cost constraint is always satisfied, i.e.  $\Ic_\Gamma(\Delta)=0$, and the privacy leakage is governed by the feasibility constraints.


\ifCLASSOPTIONcaptionsoff
 
\fi



%




\balance
\bibliographystyle{IEEEtran}
\bibliography{thesisbiblio}

%





\end{document}